\newtheorem{claim}{}[section]
\newtheorem{theorem}[claim]{Theorem}
\newtheorem{proposition}[claim]{Proposition}
\theoremstyle{remark}
\renewenvironment{proof}{\noindent{\it Proof. \hskip0pt}}
                      {$\square$\par\medskip}
\begin{document}
\baselineskip 6.0 truemm
\parindent 1.5 true pc

\newcommand\lan{\langle}
\newcommand\ran{\rangle}
\newcommand\tr{{\text{\rm Tr}}\,}
\newcommand\ot{\otimes}
\newcommand\ol{\overline}
\newcommand\join{\vee}
\newcommand\meet{\wedge}
\renewcommand\ker{{\text{\rm Ker}}\,}
\newcommand\image{{\text{\rm Im}}\,}
\newcommand\id{{\text{\rm id}}}
\newcommand\tp{{\text{\rm tp}}}
\newcommand\pr{\prime}
\newcommand\e{\epsilon}
\newcommand\la{\lambda}
\newcommand\inte{{\text{\rm int}}\,}
\newcommand\ttt{{\text{\rm t}}}
\newcommand\spa{{\text{\rm span}}\,}
\newcommand\conv{{\text{\rm conv}}\,}
\newcommand\rank{\ {\text{\rm rank of}}\ }
\newcommand\re{{\text{\rm Re}}\,}
\newcommand\ppt{\mathbb T}
\newcommand\rk{{\text{\rm rank}}\,}
\newcommand\SN{{\text{\rm SN}}\,}
\newcommand\SR{{\text{\rm SR}}\,}
\newcommand\HA{{\mathcal H}_A}
\newcommand\HB{{\mathcal H}_B}
\newcommand\HC{{\mathcal H}_C}
\newcommand\CI{{\mathcal I}}
\newcommand{\bra}[1]{\langle{#1}|}
\newcommand{\ket}[1]{|{#1}\rangle}
\newcommand\cl{\mathcal}
\newcommand\idd{{\text{\rm id}}}
\newcommand\OMAX{{\text{\rm OMAX}}}
\newcommand\OMIN{{\text{\rm OMIN}}}
\newcommand\diag{{\text{\rm Diag}}\,}
\newcommand\calI{{\mathcal I}}
\newcommand\bfi{{\bf i}}
\newcommand\bfj{{\bf j}}
\newcommand\bfk{{\bf k}}
\newcommand\bfl{{\bf l}}
\newcommand\bfp{{\bf p}}
\newcommand\bfq{{\bf q}}
\newcommand\bfzero{{\bf 0}}
\newcommand\bfone{{\bf 1}}

\title{Separability of three qubit Greenberger-Horne-Zeilinger diagonal states}

\author{Kyung Hoon Han and Seung-Hyeok Kye}
\address{Department of Mathematics, The University of Suwon, Gyeonggi-do 445-743, Korea}
\email{kyunghoon.han at gmail.com}
\address{Department of Mathematics and Institute of Mathematics, Seoul National University, Seoul 151-742, Korea}
\email{kye at snu.ac.kr}
\thanks{KHH and SHK were partially supported by NRF-2012R1A1A1012190 and 2015R1D1A1A02061612, respectively.}

\subjclass{81P15, 15A30, 46L05, 46L07}

\keywords{Greenberger-Horne-Zeilinger states, separable states, entanglement witnesses}

\begin{abstract}
We characterize the separability of three qubit GHZ diagonal states
in terms of entries. This enables us to check separability of GHZ diagonal states
without decomposition into the sum of pure product states. In the course of discussion,
we show that the necessary criterion of G\"uhne
\cite{guhne_pla_2011} for (full) separability of three qubit GHZ
diagonal states is sufficient with a simpler formula.
The main tool is to use entanglement witnesses
which are tri-partite Choi matrices of positive bi-linear maps.
\end{abstract}

\maketitle

\section{Introduction}

Entanglement is now considered as one of the most important
resources in quantum information theory, and it is crucial to detect
entanglement from separability. Positivity of partial transposes is a simple but powerful
criterion for separability \cite{{choi-ppt},{peres}}.
In fact, it is known \cite{{woronowicz},{horo-1}} that PPT property
is equivalent to separability for $2\ot 2$ or $2\ot 3$ systems.
But, it is very difficult in general to distinguish separability from entanglement,
as it is known to be an $NP$-hard problem \cite{gurvits}.

The purpose of this note is to give a complete characterization of separability
for three qubit Greenberger-Horne-Zeilinger diagonal states, which are diagonal in the GHZ basis.
Those include mixtures of GHZ states and identity, as they were considered in \cite{{dur},{murao}} for examples.
In multi-qubit systems,
GHZ states \cite{GHSZ, GHZ} are key examples of maximally entangled states,
and they are known to have many applications in quantum information theory.
They also play central roles in the classification of multi-qubit entanglement \cite{{dur},{abls},{dur_multi}}.
See survey articles \cite{{guhne_survey},{horo-survey}}
for general theory of entanglement as well as various aspects of GHZ states.

Our main tool is to use the notion of entanglement witnesses.
In the bi-partite cases, positive linear maps are very
useful to detect entanglement through the duality between tensor products
and linear maps \cite{horo-1,eom-kye}. This duality has been
formulated as the notion of entanglement witnesses  \cite{terhal}, which is still
valid in multi-partite cases. In the tri-partite cases, the second
author \cite{kye_multi_dual} has interpreted entanglement witnesses
as positive bi-linear maps. With this interpretation, we carefully choose
useful entanglement witnesses for our purpose.

There were two major steps to characterize separability of three
qubit GHZ diagonal states in the literature. Kay \cite{kay_2011}
gave a condition under which a GHZ diagonal state is separable if
and only if it is of positive partial transpose. On the other hand,
G\"uhne \cite{guhne_pla_2011} gave a necessary criterion for
separability of three qubit GHZ diagonal states, and provided a
sufficient condition for separability.
We will describe their results after we give definitions
and fix notations in the next section, where we will also
explain what is our work precisely.

\section{GHZ diagonal states}

We recall that a state is (fully) separable if it is a
convex combination of pure product states, and entangled if it is not separable.
So, a three qubit state $\varrho$ in the tensor product $M_2\ot M_2\ot M_2$ is separable if and only if
it can be written by the finite sum of the form
$$
\varrho=\sum_i p_i |\xi_i\ran \ot |\eta_i\ran \ot |\zeta_i\ran\lan \xi_i|\ot \lan\eta_i|\ot \lan \zeta_i|
$$
with $p_i>0$, $\sum_i p_i=1$ and two dimensional vectors $|\xi_i\ran, |\eta_i\ran$ and  $|\zeta_i\ran$.
Throughout this paper, three qubit states will be considered as $8\times 8$ matrices with the identification
$M_2\otimes M_2\otimes M_2 = M_8$ through the lexicographic order
of indices in the tensor product. We may take some subsystems and take the transposes for them to get partial transposes.
For example, we take the first system, then the corresponding partial transpose of a three qubit product state
$x\ot y\ot z$ is given by $x^\ttt\ot y\ot z$. 
The PPT criteria \cite{{choi-ppt},{peres}} tell us that if $\varrho$ is separable
then all the partial transposes of $\varrho$ are positive.

The three qubit GHZ state basis consists of eight vectors in $\mathbb C^2\otimes\mathbb C^2\otimes\mathbb C^2$ given by
$$
|\xi_{ijk}\ran=\frac 1{\sqrt 2}\left( |i\ran \ot|j\ran \ot|k\ran
+(-1)^i|\bar i\ran  \ot|\bar j \ran \ot|\bar k\ran \right),\qquad i+\bar i=1\mod 2,
$$
where the index $ijk$ runs through $i,j,k\in\{0,1\}$. We endow the
indices with the lexicographic order to get eight vectors
$\xi_1,\xi_2,\dots,\xi_8$. A GHZ diagonal state is of the form
$$
\varrho=\sum_{i=1}^8 p_i|\xi_i\ran\lan \xi_i|
$$
for nonnegative $p_i$'s with $\sum_{i=1}^8 p_i=1$. Then we see that
$2\varrho$ is written as the following $8\times 8$ matrix:
$$
X(a,b,c):=
\left(
\begin{matrix}
a_1 &&&&&&& c_1\\
& a_2 &&&&& c_2 & \\
&& a_3 &&& c_3 &&\\
&&& a_4&c_4 &&&\\
&&& \bar c_4& b_4&&&\\
&& \bar c_3 &&& b_3 &&\\
& \bar c_2 &&&&& b_2 &\\
\bar c_1 &&&&&&& b_1
\end{matrix}
\right),
$$
whose entries are all zero except for diagonal and anti-diagonal entries with
$$
\begin{aligned}
a=b&=(p_1+p_8,\ p_2+p_7,\ p_3+p_6,\ p_4+p_5)\in\mathbb R^4,\\
c&=(p_1-p_8,\ p_2-p_7,\ p_3-p_6,\ p_4-p_5)\in\mathbb R^4.
\end{aligned}
$$

The self-adjoint matrix of the form $X(a,b,c)$ with $a,b\in\mathbb R^4$ and $c\in\mathbb C^4$
is called {\sf X}-shaped. Therefore, we have seen that a three qubit GHZ diagonal state
is an {\sf X}-shaped state, or {\sf X}-state in short, $X(a,b,c)$ with $a,b\in\mathbb R^4$ and $c\in\mathbb R^4$.
Conversely, every {\sf X}-state $X(a,b,c)$ can be realized as a GHZ diagonal state
whenever $a=b$ and $c\in\mathbb R^4$.
A mixture of $|\xi_1\ran\lan \xi_1|$ and the identity is the simplest example
of a nontrivial GHZ diagonal state, as it was considered in \cite{murao}.
D\" ur, Cirac and Tarrach \cite{dur} considered GHZ diagonal states
with $c=(c_1,0,0,0)$, and showed that these states are separable if and only if they are of PPT.
See also \cite{dur_multi} for multi-qubit analogues.
On the other hand, {\sf X}-states with $a_ib_i=1$ and $c=(1,0,0,0)$
were considered \cite {abls} in the contexts of classification of three qubit PPT entangled edge states.
See also \cite{guhne10} for criteria for separability of three qubit states by their {\sf X}-parts.

There are other expression \cite{schack} of GHZ diagonal states, using Pauli matrices
$$
X=\begin{pmatrix}0&1\\1&0\end{pmatrix},\quad
Y=\begin{pmatrix}0&-{\rm i}\\{\rm i}&0\end{pmatrix},\quad
Z=\begin{pmatrix}1&0\\0&-1\end{pmatrix}.
$$
If $a=b$ and $c\in\mathbb R^4$, then a GHZ diagonal state
$\varrho=X(a,a,c)$ can be written as
$$
\begin{aligned}
X(a,a,c) = &{1 \over 8} (I \otimes I \otimes I
+ \lambda_2 Z \otimes Z \otimes I
+ \lambda_3 Z \otimes I \otimes Z
+ \lambda_4 I \otimes Z \otimes Z\\
&+ \lambda_5 X \otimes X \otimes X
+ \lambda_6 Y \otimes Y \otimes X
+ \lambda_7 Y \otimes X \otimes Y
+ \lambda_8 X \otimes Y \otimes Y),
\end{aligned}
$$
with the coefficients
\begin{equation}\label{lambdas}
\begin{aligned}
\lambda_2 &= 2(+a_1+a_2-a_3-a_4),\\
\lambda_3 &=2(+a_1-a_2+a_3-a_4),\qquad \lambda_4=2(+a_1-a_2-a_3+a_4),\\
\lambda_5 &= 2(+c_1+c_2+c_3+c_4),\qquad \lambda_6 = 2(-c_1-c_2+c_3+c_4),\\
\lambda_7 &= 2(-c_1+c_2-c_3+c_4),\qquad \lambda_8 = 2(-c_1+c_2+c_3-c_4).
\end{aligned}
\end{equation}
Now, we are ready to describe the results on the separability of
a three qubit state $\varrho=X(a,a,c)$ by Kay \cite{kay_2011} and G\"uhne \cite{guhne_pla_2011}:
\begin{itemize}
\item
If $\Pi_{i=5}^8 \lambda_i \le 0$, then $\varrho$ is separable if and only if $\varrho$ is of PPT \cite{kay_2011}.
\item
If $\Pi_{i=5}^8 \lambda_i >0$, then
the inequality
\begin{equation}\label{sufficient}
\min \{a_1,a_2,a_3,a_4\} \ge {\sqrt{(\lambda_5\lambda_6+\lambda_7\lambda_8)
(\lambda_5\lambda_7+\lambda_6\lambda_8)
(\lambda_5\lambda_8+\lambda_6\lambda_7)}
\over
{8\sqrt{\lambda_5\lambda_6\lambda_7\lambda_8}}}
\end{equation}
implies the separability of $\varrho$ \cite{guhne_pla_2011}.
\end{itemize}
Kay \cite{kay_2011} also considered the state
$\varrho=X(a,a,c)$ with $a=(4+\alpha,\alpha,\alpha,\alpha)$ and $c=(2,2,-2,2)$
to give examples of PPT entangled states among GHZ diagonal states. Our main contribution is
to give a condition in terms of anti-diagonal entries of $\varrho$ with the following properties:
\begin{itemize}
\item
If $\varrho$ satisfies this condition, then $\varrho$ is separable if and only if $\varrho$ satisfies (\ref{sufficient}).
\item
If $\varrho$ does not meet this condition, then $\varrho$ is separable if and only if $\varrho$ is of PPT.
\end{itemize}
See Theorem \ref {main_condition} for the precise description of this condition.
This completes the characterization of separability of three qubit GHZ diagonal states
in terms of entries. This is one of few cases in the literature where the separability problem
is solved in terms of entries. With this characterization, we may confirm the separability of a GHZ diagonal state
without a concrete decomposition into the sum of pure product states, which is usually a quite nontrivial job.

Suppose that the {\sf X}-part of a three qubit state $\varrho$ is given by $X(a,b,c)$
with $a,b\in\mathbb R^4$ and $c\in\mathbb C^4$.
In order to get
a necessary condition for separability
of GHZ diagonal states, G\"uhne \cite{guhne_pla_2011} also introduced
the following:
$$
\begin{aligned}
\mathcal L (\varrho, z) := & {\rm Re} \left(z_1 c_1 + z_2 c_2 + z_3 c_3 + z_4 \bar c_4 \right), \qquad z\in\mathbb C^4\\
\mathcal F(z) := & {\rm Re} (z_1) \cos (\alpha+\beta+\gamma) - {\rm Im} (z_1) \sin (\alpha+\beta+\gamma)
+ {\rm Re} (z_2) \cos (\alpha) - {\rm Im} (z_2) \sin (\alpha) \\
&  + {\rm Re} (z_3) \cos (\beta) - {\rm Im} (z_3) \sin (\beta)
+ {\rm Re} (z_4) \cos (\gamma) - {\rm Im} (z_4) \sin (\gamma), \\
C(z):= & \sup_{\alpha,\beta,\gamma} |\mathcal F (z)|
\end{aligned}
$$
and showed that if $\varrho=X(a,b,c)$ is separable then the inequality
\begin{equation}\label{Guhne}
\mathcal L (\varrho, z) \le C(z)~ \Delta_\varrho
\end{equation}
holds for every $z \in \mathbb C^4$, where
the number $\Delta_\varrho$ is given by
$$
\Delta_\varrho
=\min \left\{\sqrt{a_ib_i}\ (i=1,2,3,4),\ \sqrt[4]{a_1b_2b_3a_4},\ \sqrt[4]{b_1a_2a_3b_4} \right\}
$$
which is determined by the diagonal entries of $\varrho=X(a,b,c)$. If $\varrho$ is GHZ diagonal,
then we have $\Delta_\varrho =\min\{a_1, a_2, a_3, a_4\}$.
The number $C(z)$ above turns out to coincide essentially with the number in the characterization
of three qubit {\sf X}-shaped entanglement witnesses \cite{han_kye_tri}.
In this way, we got a simpler expression for $C(z)$. See Proposition \ref{3to1}.

In the next section, we will also consider {\sf X}-shaped entanglement witnesses, and show that
special kinds of them are enough to confirm the separability of GHZ diagonal states.
These will be used in Section 4 to prove our result
that G\"uhne's necessary condition is also sufficient for separability of three qubit GHZ diagonal states. In Section 5, we apply this condition
to give a concrete entry-wise characterization of separability for three qubit GHZ diagonal states.

\section{Entanglement witnesses}

We recall that a non-positive self-adjoint matrix $W$ in $M_A\otimes M_B\otimes M_C$ is an entanglement witness if
$$
\lan \varrho, W\ran:=\tr (\varrho W^\ttt)\ge 0
$$
for every separable state $\varrho$. A matrix $W$ in $M_A\otimes M_B\otimes M_C$ is written by
$$
\begin{aligned}
W
&=\sum_{i_1,j_1}|i_1\ran\lan j_1|\ot W_{i_1,j_1}\in M_{A}\ot (M_B\otimes M_C)\\
&=\sum_{i_1,j_1}\sum_{i_2,j_2}|i_1\ran\lan j_1|\otimes |i_2\ran\lan j_2|\ot W_{i_1i_2,j_1j_2}\in M_A\otimes M_B\otimes M_C
\end{aligned}
$$
in a unique way, and so we may associate a bi-linear map $\phi_W:M_A\times M_B\to M_C$ by
$$
\phi_W( |i_1\ran\lan j_1|, |i_2\ran\lan j_2|)= W_{i_1i_2,j_1j_2}
$$
for matrix units $\{|i_1\ran\lan j_1|\}$ and $\{|i_2\ran\lan j_2|\}$ of $M_A$ and $M_B$, respectively.
It is known \cite{kye_multi_dual} that $\lan \varrho, W\ran\ge 0$ for every separable state $\rho$ if and
only if $\phi_W$ is a positive bi-linear map, that is, $\phi(x,y)\in M_C$ is positive whenever $x\in M_A$
and $y\in M_B$ are positive.

For a given $|x\ran=(x_0,x_1)^\ttt\in\mathbb C^2$, we write $|x_\pm\ran =(x_0,\pm x_1)^\ttt$.
Then the {\sf X}-part of the pure product state
$\varrho=|\xi\ran\lan\xi|$ with $|\xi\ran=|x\ran\otimes |y\ran\otimes |z\ran$
is given by the average of the four pure product states:
$\varrho_X=\frac 14 \sum_{k=0}^3 |\xi_k\ran\lan\xi_k|$,
where $|\xi_k\ran$ is given by
$$
\begin{aligned}
|\xi_0\ran=&|x_+ \ran \ot |y_+ \ran \ot |z_+ \ran,\\
|\xi_1\ran=&|x_+ \ran \ot |y_- \ran \ot |z_- \ran,\\
|\xi_2\ran=&|x_- \ran \ot |y_+ \ran \ot |z_- \ran,\\
|\xi_3\ran=&|x_- \ran \ot |y_- \ran \ot |z_+ \ran.
\end{aligned}
$$
Therefore, we see that the {\sf X}-part of a separable state is again separable. This simple observation has
an important implication.

\begin{proposition}\label{X-witness}
The {\sf X}-part of a three qubit entanglement witness is still an entanglement witness unless it is positive.
\end{proposition}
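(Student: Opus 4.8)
The plan is to verify directly that the {\sf X}-part $W_X$ satisfies the two defining properties of an entanglement witness: self-adjointness together with non-negativity on separable states, and non-positivity as a matrix. The non-positivity is precisely what the clause ``unless it is positive'' excludes, so the entire content is to show that $\lan\varrho,W_X\ran\ge 0$ for every separable state $\varrho$. Self-adjointness of $W_X$ is immediate, since taking the {\sf X}-part keeps the pair of entries $W_{ij}$ and $W_{ji}$ together at each admissible position, and both the diagonal and the anti-diagonal positions are stable under $(i,j)\mapsto(j,i)$.

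The key step is the identity
$$
\lan \varrho, W_X\ran = \lan \varrho_X, W\ran,
$$
valid for every state $\varrho$. To see this, I would expand the pairing as $\lan\varrho,W\ran=\tr(\varrho W^\ttt)=\sum_{i,j}\varrho_{ij}W_{ij}$, and note that passing to the {\sf X}-part amounts to setting to zero all coordinates off the diagonal and anti-diagonal. Since this is a coordinate projection, it is symmetric for the bilinear form $\sum_{i,j}\varrho_{ij}W_{ij}$, and because the {\sf X}-pattern is stable under transposition there is no interference from the transpose appearing in the pairing. Hence only the entries in the {\sf X}-positions survive on either side, which gives the displayed equality.

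Granting this identity, the proposition follows at once from the observation recorded just before the statement: the {\sf X}-part $\varrho_X$ of a separable state $\varrho$ is again separable, being a convex combination of averaged pure product states of the form $\frac14\sum_{k=0}^3|\xi_k\ran\lan\xi_k|$. Therefore, for every separable $\varrho$,
$$
\lan \varrho, W_X\ran = \lan \varrho_X, W\ran \ge 0,
$$
the last inequality holding because $W$ is an entanglement witness and $\varrho_X$ is separable. Combined with self-adjointness, this shows that $W_X$ is an entanglement witness whenever it fails to be positive. There is no real obstacle in this argument; the only point requiring care is to confirm that the {\sf X}-projection is symmetric for the pairing and commutes with the transpose, so that the reduction $\lan\varrho,W_X\ran=\lan\varrho_X,W\ran$ is legitimate, after which the preceding separability observation does all the work.
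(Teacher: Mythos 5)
Your proof is correct and follows exactly the paper's argument: the identity $\lan \varrho, W_X\ran = \lan\varrho_X, W\ran$ combined with the observation, stated just before the proposition, that the {\sf X}-part of a separable state is separable. The paper merely asserts the identity without proof, whereas you justify it via the coordinate-projection symmetry of the pairing; this is a harmless elaboration of the same route.
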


\begin{proof}
We denote by $W_X$ and $\varrho_X$ the {\sf X}-parts of an entanglement witness $W$ and a state $\varrho$, respectively.
Suppose that $\varrho$ is separable. Then we have
$\lan \varrho, W_X\ran = \lan\varrho_X, W\ran\ge 0$,
because $\varrho_X$ is separable. This shows that $W_X$ is an entanglement witness.
\end{proof}

We say that an {\sf X}-shaped self-adjoint matrix $W=X(s,t,u)$ is {\sl GHZ diagonal} if $s=t$ and $u \in \mathbb R$.

\begin{theorem}\label{GHZ-witness}
Let $\varrho$ be a GHZ diagonal state. Then, $\varrho$ is separable if and only if
$\lan \varrho, W \ran \ge 0$ for every GHZ diagonal witness $W$.
\end{theorem}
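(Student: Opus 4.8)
The forward implication is immediate from the definition of an entanglement witness, so the whole content is the converse, which I would prove by contraposition: assuming $\varrho$ is an entangled GHZ diagonal state, I want to produce a \emph{GHZ diagonal} witness $W$ with $\lan \varrho, W\ran < 0$. The starting point is the standard duality between separable states and entanglement witnesses behind \cite{terhal}: since the separable states form a closed convex set not containing $\varrho$, a separation argument yields some entanglement witness $W_0$ with $\lan \varrho, W_0\ran < 0$. The real task is then to upgrade this arbitrary $W_0$ into one of the form $X(s,s,u)$ with $u$ real, without destroying either the witness property or the strict inequality against $\varrho$.

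My plan is to achieve this by averaging $W_0$ over a group of local unitaries that fixes $\varrho$. First I would pass to the {\sf X}-part $W_1=(W_0)_X$. By Proposition \ref{X-witness} this is again a witness, since it cannot be positive (otherwise $\lan \varrho, W_1\ran\ge 0$, whereas $\lan \varrho, W_1\ran=\lan \varrho, W_0\ran<0$ because $\varrho$ is itself {\sf X}-shaped). Concretely, $W_1=\frac14\sum_g g W_0 g$ is the average over the four local unitaries $I\ot I\ot I,\ I\ot Z\ot Z,\ Z\ot I\ot Z,\ Z\ot Z\ot I$ underlying the {\sf X}-part identity recorded just before Proposition \ref{X-witness}; each such $g$ fixes $\varrho$ (as $g$ is diagonal and $g_ig_{9-i}=1$ on the anti-diagonal positions), which is exactly why $\lan \varrho, W_1\ran=\lan \varrho, W_0\ran$.

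The decisive step is to symmetrize the two diagonal blocks and to kill the imaginary part of the anti-diagonal. Here I would conjugate by the local unitary $J=X\ot X\ot X$, which is precisely the anti-diagonal flip sending the index $i$ to $9-i$. A direct check gives $J\,X(s,t,u)\,J=X(t,s,\bar u)$, so $W:=\frac12(W_1+JW_1J)=X(\tfrac{s+t}2,\tfrac{s+t}2,\re u)$ is GHZ diagonal. Because $J$ is a local unitary, $\sigma\mapsto J\sigma J$ preserves separability, so $\lan \sigma, JW_1J\ran=\lan J\sigma J, W_1\ran\ge 0$ and hence $\lan \sigma, W\ran\ge 0$ for every separable $\sigma$; and since $\varrho=X(a,a,c)$ with $c$ real satisfies $J\varrho J=\varrho$, I get $\lan \varrho, W\ran=\lan \varrho, W_1\ran<0$. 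As before $W$ is non-positive, so $W$ is the required GHZ diagonal witness detecting $\varrho$, which finishes the contrapositive.

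The only genuine obstacle is the conceptual one of the last step: recognizing that restricting to GHZ diagonal witnesses is exactly averaging over the group generated by the even-weight $Z$ operators and $J=X^{\ot 3}$, and that this group fixes every GHZ diagonal state. Once this is seen, each verification reduces to the bookkeeping that $g\varrho g=\varrho$ and that the two averages act on the parameters $(s,t,u)$ as claimed, all of which is routine because the operators involved are diagonal or the single flip $J$. The remaining ingredients — existence of a detecting witness and stability of the witness property under local-unitary conjugation — are standard and need no new input.
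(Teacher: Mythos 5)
Your proof is correct and is essentially the paper's own argument run in contrapositive form: both reduce to {\sf X}-shaped witnesses via Proposition \ref{X-witness} and then symmetrize by averaging with the conjugate under $U=X\ot X\ot X$, using that GHZ diagonal states are fixed by this flip. The only cosmetic differences are that the paper argues directly (pairing an arbitrary {\sf X}-shaped witness against $\varrho$ equals pairing its GHZ diagonal average, which is $\ge 0$ by hypothesis) rather than constructing a detecting GHZ diagonal witness, and your identification of the {\sf X}-part projection as an average over even-weight $Z$ conjugations is a nice but inessential addition.
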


\begin{proof}
The `only if' part is clear. For the `if' part, it suffices to show that $\lan \varrho, W \ran \ge 0$
for every {\sf X}-shaped witness $W$ by Proposition \ref{X-witness}.
We consider the operation on $M_2 \otimes M_2 \otimes M_2$ which interchanges $|0\ran$ and $|1\ran$ in each subsystem.
This operation sends an entanglement witness $W=X(s,t,u)$ to the entanglement witness
$\widetilde W=X(t,s,\bar u)$. Explicitly, we can write $\widetilde W = U W U^*$ for the symmetric unitary
$$
U=\begin{pmatrix}0&1\\1&0\end{pmatrix} \otimes \begin{pmatrix}0&1\\1&0\end{pmatrix} \otimes \begin{pmatrix}0&1\\1&0\end{pmatrix}.
$$
The average
$$
{W + \widetilde W \over 2} = X\left({s+t \over 2}, {s+t \over 2}, {\rm Re} ~u\right)
$$
is GHZ diagonal, and $W$ is GHZ diagonal if and only if $(W + \widetilde W) \slash 2=W$. In other words,
the mapping $W \mapsto (W + \widetilde W) \slash 2$ is an idempotent from the real space of {\sf X}-shaped self-adjoint matrices
onto its subspace consisting of GHZ diagonal matrices.
We have
$$
\begin{aligned}
\lan \varrho, W \ran & = \left\lan {\varrho + \widetilde \varrho \over 2}, W \right\ran \\
& = {1 \over 2} \left(\lan \varrho, W\ran + \tr (U\varrho U^* W^\ttt)\right) \\
& = {1 \over 2} \left(\lan \varrho, W\ran + \tr (\varrho (U W U^*)^\ttt)\right)
 = \left\lan \varrho, {W + \widetilde W \over 2}\right\ran
 \ge 0,
\end{aligned}
$$
which completes the proof.
\end{proof}

For a three qubit {\sf X}-shaped self-adjoint matrix $W=X(s,t,u)$, the authors \cite{han_kye_tri} have shown
that $W=X(s,t,u)$ is an entanglement witness if and only if it is non-positive and satisfies
the inequality
$$
\begin{aligned}
\sqrt{(s_1+t_4|\alpha|^2)(s_4+t_1|\alpha|^2)}
+&\sqrt{(s_2+t_3|\alpha|^2)(s_3+t_2|\alpha|^2)}\\
&\phantom{XXXX}\ge |u_1\bar\alpha +\bar u_4 \alpha|+|u_2 \bar \alpha + \bar u_3 \alpha|
\end{aligned}
$$
for each complex number $\alpha\in\mathbb C$.
If we write $\alpha=r e^{{\rm i}\theta}$, then the above inequality
is equivalent to the following
$$
\begin{aligned}
\sqrt{(s_1 r^{-1}+t_4r)(s_4r^{-1}+t_1r)}
+&\sqrt{(s_2r^{-1}+t_3r)(s_3r^{-1}+t_2r)}\\
&\phantom{XXXX}\ge |u_1e^{-{\rm i}\theta} +\bar u_4 e^{{\rm i}\theta}|+|u_2 e^{-{\rm i}\theta} + \bar u_3 e^{{\rm i}\theta}|,
\end{aligned}
$$
which holds for every $r>0$ and $\theta$. For a given $(s,t)\in\mathbb R^4_+\times\mathbb R^4_+$ and $u\in\mathbb C^4$, we introduce
two numbers:
$$
\begin{aligned}
A(s,t)
&=\inf_{r>0} \left[\sqrt{(s_1 r^{-1}+t_4r)(s_4r^{-1}+t_1r)} +\sqrt{(s_2r^{-1}+t_3r)(s_3r^{-1}+t_2r)}\right],\\
B(u)
&=\max_\theta\left(|u_1e^{{\rm i}\theta} +\bar u_4|+|u_2 e^{{\rm i}\theta} + \bar u_3|\right).
\end{aligned}
$$
Here, $\mathbb R_+$ denotes the interval $[0,\infty)$. Then we have the following:

\begin{proposition}\label{XEW}
A three qubit non-positive self-adjoint matrix $W=X(s,t,u)$ is an entanglement witness if and only if the inequality
$A(s,t)\ge B(u)$ holds.
\end{proposition}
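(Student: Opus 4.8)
The plan is to read the claim off directly from the characterization recalled just above, by separating the two free parameters. Writing $\alpha=r e^{{\rm i}\theta}$, the cited result of \cite{han_kye_tri} says that the non-positive matrix $W=X(s,t,u)$ is an entanglement witness exactly when
$$
\sqrt{(s_1 r^{-1}+t_4r)(s_4r^{-1}+t_1r)}+\sqrt{(s_2r^{-1}+t_3r)(s_3r^{-1}+t_2r)}\ \ge\ |u_1e^{-{\rm i}\theta}+\bar u_4 e^{{\rm i}\theta}|+|u_2 e^{-{\rm i}\theta}+\bar u_3 e^{{\rm i}\theta}|
$$
holds for all $r>0$ and all $\theta$. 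The essential observation is that the left-hand side is a function of $r$ alone, say $L(r)$, while the right-hand side is a function of $\theta$ alone, say $R(\theta)$, so the two parameters do not interact.

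First I would isolate the elementary principle that an inequality $L(r)\ge R(\theta)$ valid simultaneously for every $r>0$ and every $\theta$ is equivalent to the single scalar inequality $\inf_{r>0}L(r)\ge\sup_\theta R(\theta)$. Indeed, fixing $\theta$ shows that $R(\theta)$ is a lower bound for $\{L(r)\}$, hence $\inf_r L(r)\ge R(\theta)$, and taking the supremum over $\theta$ gives one direction; the converse is immediate from $L(r)\ge\inf_r L\ge\sup_\theta R\ge R(\theta)$. By the very definition of $A(s,t)$ we have $\inf_{r>0}L(r)=A(s,t)$, so it remains only to identify $\sup_\theta R(\theta)$ with $B(u)$.

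For this last identification I would factor the unit-modulus scalar $e^{-{\rm i}\theta}$ out of each absolute value, rewriting $|u_1e^{-{\rm i}\theta}+\bar u_4 e^{{\rm i}\theta}|=|u_1+\bar u_4 e^{2{\rm i}\theta}|$ and likewise for the second term, so that $R(\theta)=|u_1+\bar u_4 e^{2{\rm i}\theta}|+|u_2+\bar u_3 e^{2{\rm i}\theta}|$. As $\theta$ runs over the reals the angle $2\theta$ sweeps the whole unit circle, so the substitution $\phi=2\theta$ yields $\sup_\theta R(\theta)=\sup_\phi\left(|u_1+\bar u_4 e^{{\rm i}\phi}|+|u_2+\bar u_3 e^{{\rm i}\phi}|\right)$; multiplying each term inside by the unit scalar $e^{-{\rm i}\phi}$ and renaming the angle ($\theta=-\phi$) shows this equals $\max_\theta\left(|u_1e^{{\rm i}\theta}+\bar u_4|+|u_2e^{{\rm i}\theta}+\bar u_3|\right)=B(u)$. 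Combining the three steps gives that $W$ is an entanglement witness if and only if $A(s,t)\ge B(u)$.

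I expect no serious obstacle here: the analytic content has already been supplied by \cite{han_kye_tri} and by the passage to the $(r,\theta)$ form, so the proposition is in essence a repackaging that decouples the diagonal data $(s,t)$ from the anti-diagonal data $u$. The only points needing a little care are bookkeeping: keeping the two angular parametrizations consistent across both absolute values in the last step, and noting that whenever $W$ is a witness its diagonal entries are nonnegative (tested against product states), so that $s,t\in\mathbb R_+^4$ and the square roots defining $A(s,t)$ are meaningful throughout.
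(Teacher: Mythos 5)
Your proposal is correct and takes essentially the same route as the paper, which presents Proposition~\ref{XEW} as an immediate consequence of the characterization from \cite{han_kye_tri} once the substitution $\alpha = re^{{\rm i}\theta}$ is made: the whole content is precisely the decoupling of the $r$-dependence (left side) from the $\theta$-dependence (right side), so that ``for all $r,\theta$'' becomes $\inf_r \ge \sup_\theta$, together with the angular reparametrization identifying $\sup_\theta$ of the right side with $B(u)$. Your write-up merely makes explicit the bookkeeping (the $\phi=2\theta$ substitution and the nonnegativity of the diagonal entries) that the paper leaves to the reader.
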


It is surprising to note that the number $B(u)$ is essentially identical with the number $C(z)$ in the G\"uhne's criterion.
This enables us to calculate the number $C(z)$ in terms of entries of $z$.

\begin{proposition}\label{3to1}
For $z \in \mathbb C^4$, we have
$C(z) = B(z_1,z_2,z_3,\bar z_4)$.
\end{proposition}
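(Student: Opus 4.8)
The plan is to turn both numbers into suprema of the real part of one and the same trigonometric sum over the three-torus, and then match them by a linear change of the angle variables. First I would rewrite $\mathcal F$. Using the identity $\re(w)\cos\phi-{\rm Im}(w)\sin\phi=\re(w e^{{\rm i}\phi})$ for $w\in\mathbb C$, the four summands defining $\mathcal F(z)$ collapse to
$$
\mathcal F(z)=\re\left(z_1 e^{{\rm i}(\alpha+\beta+\gamma)}+z_2 e^{{\rm i}\alpha}+z_3 e^{{\rm i}\beta}+z_4 e^{{\rm i}\gamma}\right).
$$
Next I would remove the absolute value in the definition of $C(z)$. Replacing $(\alpha,\beta,\gamma)$ by $(\alpha+\pi,\beta+\pi,\gamma+\pi)$ multiplies each of the four exponentials by $-1$ (note $e^{3\pi{\rm i}}=-1$), so it sends $\mathcal F(z)$ to $-\mathcal F(z)$. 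Hence $\sup_{\alpha,\beta,\gamma}|\mathcal F(z)|=\sup_{\alpha,\beta,\gamma}\mathcal F(z)$, and so $C(z)=\sup_{\alpha,\beta,\gamma}\mathcal F(z)$ with $\mathcal F$ as displayed above.

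For the other side, I would first record that, writing out the definition of $B$ at $u=(z_1,z_2,z_3,\bar z_4)$ and using $\bar u_4=z_4$, $\bar u_3=\bar z_3$, one gets $B(z_1,z_2,z_3,\bar z_4)=\max_\theta\left(|z_1 e^{{\rm i}\theta}+z_4|+|z_2 e^{{\rm i}\theta}+\bar z_3|\right)$. Then I would expand each modulus by $|w|=\max_\phi \re(w e^{{\rm i}\phi})$, introducing an optimizing angle $\gamma$ for the first modulus and $\psi$ for the second, obtaining
$$
B(z_1,z_2,z_3,\bar z_4)=\sup_{\theta,\gamma,\psi}\left[\re\left(z_1 e^{{\rm i}(\theta+\gamma)}+z_4 e^{{\rm i}\gamma}\right)+\re\left(z_2 e^{{\rm i}(\theta+\psi)}+\bar z_3 e^{{\rm i}\psi}\right)\right].
$$
The decisive step is the substitution $\theta=\alpha+\beta$, $\psi=-\beta$, with $\gamma$ unchanged, which is a bijection of $(\mathbb R/2\pi\mathbb Z)^3$. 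Under it the terms match those of $\mathcal F$ one by one: $z_1 e^{{\rm i}(\theta+\gamma)}=z_1 e^{{\rm i}(\alpha+\beta+\gamma)}$, $z_2 e^{{\rm i}(\theta+\psi)}=z_2 e^{{\rm i}\alpha}$, $z_4 e^{{\rm i}\gamma}=z_4 e^{{\rm i}\gamma}$, while for the last one $\re\left(\bar z_3 e^{{\rm i}\psi}\right)=\re\left(\overline{z_3 e^{{\rm i}\beta}}\right)=\re\left(z_3 e^{{\rm i}\beta}\right)$ because $\psi=-\beta$. Summing and invoking the linearity of $\re$ shows the bracketed quantity equals $\mathcal F(z)$, so the two suprema coincide and $C(z)=B(z_1,z_2,z_3,\bar z_4)$.

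The only genuinely delicate points, which I would check carefully, are the conjugation bookkeeping on the $z_3$-term (this is exactly what forces the last slot of $B$ to carry $\bar z_4$ rather than $z_4$, and makes $\bar z_3$ appear on the nose), and the verification that the angle substitution is a bijection of the torus, so that passing the supremum through it is legitimate. Everything else rests on the two elementary identities $\re(w e^{{\rm i}\phi})=\re(w)\cos\phi-{\rm Im}(w)\sin\phi$ and $|w|=\max_\phi\re(w e^{{\rm i}\phi})$ together with the fact that the real part is $\mathbb R$-linear.
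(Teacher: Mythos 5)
Your proof is correct and is essentially the paper's argument repackaged: the paper proves the two inequalities separately (a triangle-inequality estimate for $C(z)\le B(z_1,z_2,z_3,\bar z_4)$, and the optimal phase choices $\phi=-\arg(z_1e^{{\rm i}\theta}+z_4)$, $\psi=-\arg(z_2e^{{\rm i}\theta}+\bar z_3)$ for the reverse), and your substitution $\theta=\alpha+\beta$, $\psi=-\beta$ is exactly the one the paper uses in its converse step. The only organizational difference is that you fold both directions into a single equality of suprema via a unimodular, hence bijective, reparametrization of the torus (also handling the absolute value in $C(z)$ explicitly by the $\pi$-shift symmetry, which the paper glosses over), so the underlying phase bookkeeping is the same.
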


\begin{proof}
We have
$$
\begin{aligned}
\mathcal F(z) & = {\rm Re} \left(z_1 e^{{\rm i}(\alpha+\beta+\gamma)} + z_2 e^{{\rm i}\alpha} + z_3 e^{{\rm i}\beta} + z_4 e^{{\rm i}\gamma} \right)\\
& = {\rm Re} \left(z_1 e^{{\rm i}(\alpha+\beta+\gamma)} + z_2 e^{{\rm i}\alpha} + \bar z_3 e^{-{\rm i}\beta} + z_4 e^{{\rm i}\gamma} \right)\\
& \le |z_1 e^{{\rm i}(\alpha+\beta+\gamma)} + z_4 e^{{\rm i}\gamma}| + |z_2e^{{\rm i}\alpha} + \bar z_3 e^{-{\rm i} \beta}| \\
& = |z_1 e^{{\rm i}(\alpha+\beta)} + z_4| + |z_2 e^{{\rm i}(\alpha+\beta)} + \bar z_3| \\
& \le \max_\theta |z_1 e^{{\rm i} \theta}+z_4| + |z_2 e^{{\rm i} \theta} + \bar z_3|.
\end{aligned}
$$
For the converse, we put $\phi := -\arg (z_1 e^{{\rm i}\theta}+z_4)$ and $\psi:=-\arg (z_2e^{{\rm i}\theta}+\bar z_3)$.
Then, we have
$$
\begin{aligned}
|z_1 e^{{\rm i} \theta}+z_4| + |z_2 e^{{\rm i} \theta} + \bar z_3|
& = {\rm Re}\left((z_1 e^{{\rm i} \theta}+z_4) e^{{\rm i}\phi}\right) + {\rm Re}\left((z_2 e^{{\rm i} \theta}+\bar z_3) e^{{\rm i}\psi}\right) \\
& = {\rm Re} \left(z_1 e^{{\rm i}(\theta+\phi)} + z_2 e^{{\rm i}(\theta+\psi)} + z_3 e^{-{\rm i}\psi} + z_4 e^{{\rm i}\phi} \right)\\
& \le C(z),
\end{aligned}
$$
where the last inequality follows from $(\theta+\psi)-\psi+\phi = \theta+\phi$.
\end{proof}

\section{A proof of G\"uhne's conjecture}

In this section, we show that G\"uhne's necessary condition {\rm (\ref{Guhne})} is also sufficient
for separability of GHZ diagonal states.
We begin with a characterization of separability of general {\sf X}-shaped states.

\begin{proposition}\label{general}
Let $\varrho$ be a three qubit state whose {\sf X}-part is given by $X(a,b,c)$. If $\varrho$ is separable, then the inequality
$$
2A(x,y) \mathcal L (\varrho, z) \le C(z) \left(\sum_{i=1}^4 a_i x_i + \sum_{i=1}^4 b_i y_i \right)
$$
holds for every $x, y \in \mathbb R^4_+$ and $z \in \mathbb C^4$.
The converse holds when $\varrho$ is {\sf X}-shaped.
\end{proposition}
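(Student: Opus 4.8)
The plan is to deduce the inequality from the nonnegativity of the pairing $\langle\varrho,W\rangle$ against suitably chosen {\sf X}-shaped witnesses $W=X(s,t,u)$, and conversely to recover nonnegativity of all such pairings from the inequality. Everything rests on one elementary computation, which I would record first: since $W$ is {\sf X}-shaped, the pairing only sees the {\sf X}-part $X(a,b,c)$ of $\varrho$, and evaluating $\tr(\varrho W^\ttt)$ entry by entry gives
$$
\langle\varrho,W\rangle=\sum_{i=1}^4 a_is_i+\sum_{i=1}^4 b_it_i+2\sum_{i=1}^4 {\rm Re}(c_i\bar u_i),
$$
the diagonal entries producing the two real sums and each conjugate pair of anti-diagonal entries contributing one term $2\,{\rm Re}(c_i\bar u_i)$. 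I will also use that $B$ is invariant under entrywise conjugation of its argument (replace $\theta$ by $-\theta$ in its defining maximum), so that by Proposition \ref{3to1}, $B(\bar z_1,\bar z_2,\bar z_3,z_4)=B(z_1,z_2,z_3,\bar z_4)=C(z)$.

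For the forward implication I fix $x,y\in\mathbb R^4_+$ and $z\in\mathbb C^4$. If $C(z)=0$ then $B(z_1,z_2,z_3,\bar z_4)=0$ forces $z=0$ and both sides vanish, so I assume $C(z)>0$ and set $s=C(z)\,x$, $t=C(z)\,y$ and $u=-A(x,y)\,(\bar z_1,\bar z_2,\bar z_3,z_4)$. By the conjugation identity above and homogeneity of $A$ and $B$, one gets $A(s,t)=C(z)A(x,y)=B(u)$. Hence $W=X(s,t,u)$ has $s,t\in\mathbb R^4_+$ and $A(s,t)\ge B(u)$, so it is block positive (that is, $\phi_W$ is a positive bi-linear map): either $W$ is positive, or it is an entanglement witness by Proposition \ref{XEW}. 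In both cases $\langle\varrho,W\rangle\ge0$ for the separable $\varrho$. Substituting $s,t,u$ into the pairing formula, the anti-diagonal sum collapses to $-2A(x,y)\,\mathcal L(\varrho,z)$, and $\langle\varrho,W\rangle\ge0$ becomes exactly the asserted inequality.

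For the converse, let $\varrho=X(a,b,c)$ be {\sf X}-shaped and suppose the inequality holds for all $x,y,z$. Since $\varrho$ is {\sf X}-shaped, $\langle\varrho,W\rangle=\langle\varrho,W_X\rangle$ for every $W$, and $W_X$ is block positive whenever $W$ is (Proposition \ref{X-witness}); by the duality between separable states and entanglement witnesses it therefore suffices to prove $\langle\varrho,W\rangle\ge0$ for every {\sf X}-shaped block-positive $W=X(s,t,u)$, i.e.\ for all $s,t\in\mathbb R^4_+$ with $A(s,t)\ge B(u)$. Given such a $W$, I apply the hypothesis with $x=s$, $y=t$ and $z=-(\bar u_1,\bar u_2,\bar u_3,u_4)$; the same conjugation bookkeeping gives $\mathcal L(\varrho,z)=-\sum_i{\rm Re}(c_i\bar u_i)$ and $C(z)=B(u)$. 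If $\sum_i{\rm Re}(c_i\bar u_i)\ge0$, then $\langle\varrho,W\rangle\ge\sum_i a_is_i+\sum_i b_it_i\ge0$ is immediate from $a,b,s,t\ge0$. Otherwise $B(u)=0$ forces $u=0$ (handled directly), while for $B(u)>0$ the hypothesis together with $A(s,t)\ge B(u)$ gives $2B(u)\big(-\sum_i{\rm Re}(c_i\bar u_i)\big)\le 2A(s,t)\big(-\sum_i{\rm Re}(c_i\bar u_i)\big)\le B(u)\big(\sum_i a_is_i+\sum_i b_it_i\big)$, and dividing by $B(u)$ yields $\langle\varrho,W\rangle\ge0$.

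The delicate point throughout is the conjugation bookkeeping. Because $\mathcal L(\varrho,z)$ treats the fourth slot (the term $z_4\bar c_4$) differently from the first three, and because $C(z)=B(z_1,z_2,z_3,\bar z_4)$ carries a conjugate only in the fourth slot, the correct dictionary between the witness parameter $u$ and the test vector $z$ is $u_i=-A(x,y)\bar z_i$ for $i=1,2,3$ but $u_4=-A(x,y)z_4$; only with these conjugations does the anti-diagonal term of the pairing reduce to $-2A(x,y)\mathcal L(\varrho,z)$ while $B(u)=A(x,y)C(z)$ holds simultaneously. Verifying the conjugation-invariance of $B$, and tidying the boundary cases $C(z)=0$ and $B(u)=0$ together with the saturated equality $A(s,t)=B(u)$ that keeps $W$ block positive, are the only genuine work; the rest is linear algebra.
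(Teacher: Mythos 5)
Your proposal is correct and follows essentially the same route as the paper: for necessity you build an {\sf X}-shaped witness from $(x,y,z)$ normalized so that $A(s,t)=B(u)$ and invoke Propositions \ref{XEW} and \ref{3to1}, and for the converse you reduce to {\sf X}-shaped block-positive witnesses via Proposition \ref{X-witness} and duality; your different scaling ($s=C(z)x$ versus the paper's $s=x/A(x,y)$) is immaterial, and you are in fact more careful than the paper about the degenerate cases $C(z)=0$, $B(u)=0$ and about $W$ possibly being positive rather than a witness.

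One correction to your ``conjugation bookkeeping,'' which you present as the crux: with the paper's pairing $\lan\varrho,W\ran=\tr(\varrho W^{\ttt})$, the anti-diagonal contribution is $2\sum_i{\rm Re}(c_iu_i)$, not $2\sum_i{\rm Re}(c_i\bar u_i)$ --- your displayed formula is the one for $\tr(\varrho W)$ --- and accordingly the paper's dictionary is $u=-\tfrac1{C(z)}(z_1,z_2,z_3,\bar z_4)$, i.e.\ a conjugate in the fourth slot only, the entrywise conjugate of yours. The slip is self-cancelling: since $B(\bar u)=B(u)$, the class of {\sf X}-shaped entanglement witnesses is invariant under $u\mapsto\bar u$, and both $\mathcal L$ and $C$ transform compatibly under $z\mapsto\bar z$ with $C(\bar z)=C(z)$, so your argument is the paper's after the harmless relabeling $z\leftrightarrow\bar z$ (equivalently, after reading your bracket as $\tr(\varrho W)$, which is likewise nonnegative on separable states against block-positive $W$). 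Had the statement fixed a single $z$ rather than quantified over all $z\in\mathbb C^4$, this conjugation error would have produced the wrong inequality, so it is worth fixing even though here it costs nothing.
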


\begin{proof}
We consider $W=X(s,t,u)$ with
$$
s={1 \over A(x, y)} x, \qquad t={1 \over A(x, y)} y, \qquad u=-{1 \over C(z)}(z_1, z_2, z_3, \bar z_4).
$$
By Proposition \ref{3to1}, we have
$$
\begin{aligned}
B(u) & = \max_\theta {|z_1 e^{{\rm i}\theta}+z_4|+|z_2e^{{\rm i}\theta}+\bar z_3| \over C(z)} \\
& = 1 \\
& = \inf_{r>0} {\sqrt{(x_1 r^{-1}+y_4r)(x_4r^{-1}+y_1r)} +\sqrt{(x_2r^{-1}+y_3r)(x_3r^{-1}+y_2r)} \over A(x, y)} \\
& = A(s,t).
\end{aligned}
$$
By Proposition \ref{XEW}, it follows that
$$
0 \le \lan \varrho, W \ran = {\sum_{i=1}^4 a_i x_i + \sum_{i=1}^4 b_i y_i \over A(x, y)}
- {2 \mathcal L (\varrho, z) \over C(z)}.
$$

For the converse, it suffices to show that $\lan \varrho, W \ran \ge 0$ for every {\sf X}-shaped witness $W$ by Proposition \ref{X-witness}.
We put $W=X(x, y, -(z_1,z_2,z_3,\bar z_4))$.
We may assume without loss of generality that $A(x, y) \ge 1 \ge B(z_1,z_2,z_3,\bar z_4)$ and $\mathcal L (\varrho, z) \ge 0$. Then we have
$$
\begin{aligned}
\lan \varrho, W \ran & = \sum_{i=1}^4 a_i x_i + \sum_{i=1}^4 b_i y_i - 2 \mathcal L (\varrho, z) \\
& \ge {\sum_{i=1}^4 a_i x_i + \sum_{i=1}^4 b_i y_i \over A(x, y)} - {2 \mathcal L (\varrho, z) \over C(z)}
\ge 0,
\end{aligned}
$$
as it was desired.
\end{proof}

Choosing special types of vectors $x$ and $y$ in Proposition \ref{general} gives rise to
the G\"uhne's necessary condition for separability. We include here our proof for completeness as well as
motivation to prove sufficiency. We recall that the {\sf X}-part of a separable state is again separable,
and the following criteria involves only the {\sf X}-part of a state.

\begin{theorem}[G\"uhne]\label{guhne}
If $\varrho$ is a three qubit separable state, then the inequality {\rm (\ref{Guhne})} holds
for every $z \in \mathbb C^4$.
\end{theorem}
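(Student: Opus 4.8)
The plan is to deduce (\ref{Guhne}) from Proposition \ref{general} by feeding it six carefully chosen pairs $(x,y)\in\mathbb R^4_+\times\mathbb R^4_+$, one for each of the six quantities in the minimum defining $\Delta_\varrho$. Since $\mathcal L(\varrho,z)$ and $\Delta_\varrho$ depend only on the {\sf X}-part $X(a,b,c)$ of $\varrho$, I may work directly with the inequality of that proposition. Whenever $C(z)>0$ and $A(x,y)>0$ it rearranges to
$$
\mathcal L(\varrho,z)\le \frac{C(z)}{2A(x,y)}\left(\sum_{i=1}^4 a_i x_i+\sum_{i=1}^4 b_i y_i\right),
$$
so the job is to choose $(x,y)$ driving the ratio $\bigl(\sum a_i x_i+\sum b_i y_i\bigr)/\bigl(2A(x,y)\bigr)$ down to each target value. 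The case $C(z)=0$ forces $z=0$, where (\ref{Guhne}) is the trivial statement $0\le 0$.

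For the four terms $\sqrt{a_i b_i}$ I would take $x$ and $y$ supported on the single index $i$. Inspecting the definition of $A(s,t)$, all but one factor under the two square roots then vanish and the $r$-dependence cancels, leaving $A(x,y)=\sqrt{x_i y_i}$; a two-term AM-GM bound on $a_i x_i+b_i y_i$ gives the ratio $\ge\sqrt{a_i b_i}$, with equality at, say, $x_i=b_i$, $y_i=a_i$. Running Proposition \ref{general} with this choice yields $\mathcal L(\varrho,z)\le C(z)\sqrt{a_i b_i}$.

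For the two fourth-root terms the indices must be split between $x$ and $y$ so as to respect the pairing $\{1,4\}$ and $\{2,3\}$ built into the two square roots of $A(s,t)$. To reach $\sqrt[4]{a_1 b_2 b_3 a_4}$ I would set $x=(x_1,0,0,x_4)$ and $y=(0,y_2,y_3,0)$; then the first square root collapses to $\sqrt{x_1 x_4}\,r^{-1}$ and the second to $\sqrt{y_2 y_3}\,r$, so optimizing over $r$ by AM-GM gives $A(x,y)=2(x_1 x_4 y_2 y_3)^{1/4}$. A four-term AM-GM on $a_1 x_1+a_4 x_4+b_2 y_2+b_3 y_3$ then bounds the ratio below by $(a_1 a_4 b_2 b_3)^{1/4}$, with equality when the four summands agree. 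The mirror choice $x=(0,x_2,x_3,0)$, $y=(y_1,0,0,y_4)$ handles $\sqrt[4]{b_1 a_2 a_3 b_4}$. Taking the minimum of the six resulting right-hand constants reproduces exactly $\mathcal L(\varrho,z)\le C(z)\,\Delta_\varrho$.

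All the estimates are elementary AM-GM inequalities, so the one step I would watch most carefully is the bookkeeping that matches the index structure of $A(s,t)$ (the way indices $1,4$ and $2,3$ are coupled inside each square root) to the index sets of the cross terms $\sqrt[4]{a_1 b_2 b_3 a_4}$ and $\sqrt[4]{b_1 a_2 a_3 b_4}$, i.e.\ deciding which indices feed $x$ (and thus weight the $a_i$) and which feed $y$ (and thus weight the $b_i$). The only other point to dispatch is degeneracy: if some entry vanishes the corresponding target is $0$, and rather than an exact equalizer one lets the free weights tend to $0$ or $\infty$ and uses the infimum of the ratio, which still delivers the claimed bound.
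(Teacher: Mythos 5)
Your proof is correct and takes essentially the same route as the paper: the paper's own argument feeds Proposition \ref{general} the very same six supported pairs $(x,y)$ --- singletons on index $i$ for the targets $\sqrt{a_ib_i}$, and the $\{1,4\}$-versus-$\{2,3\}$ split for $\sqrt[4]{a_1b_2b_3a_4}$ and $\sqrt[4]{b_1a_2a_3b_4}$ --- writing down the exact equalizing values whose existence you infer from AM-GM. The only cosmetic difference is the degenerate case: the paper reduces to $c\neq 0$ and invokes the PPT condition to guarantee all $a_i,b_i$ are nonzero (so its explicit formulas with negative powers make sense), whereas you handle vanishing entries by a limiting argument inside the same framework; both dispositions are valid.
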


\begin{proof}
It suffices to prove the required inequality,
when $c$ is a nonzero vector. By the PPT condition, all $a_i, b_i$ are nonzero.
Now, we fix $i$ among $1,2,3,4$, and define $x, y \in\mathbb R^4_+$ by
$$
x_j = \begin{cases} \sqrt{b_i \slash a_i}, & j=i \\ 0, & j \ne i \end{cases},  \qquad
y_j = \begin{cases} \sqrt{a_i \slash b_i}, & j=i \\ 0, & j \ne i \end{cases}.
$$
Then we have
$A(x, y) = 1$ and
$$
\sum_{j=1}^4 a_j x_j + \sum_{j=1}^4 b_j y_j = 2\sqrt{a_i b_i},
$$
which yields $\mathcal L (\varrho, z) \le C(z) \sqrt{a_i b_i}$.

Next, we define $x, y \in \mathbb R^4_+$ as follows:
$$
x =
\left( \sqrt[4]{a_1^{-3}b_2b_3a_4}, 0, 0,\ \sqrt[4]{a_1b_2b_3a_4^{-3}}\right),\quad
y =
\left( 0, \sqrt[4]{a_1b_2^{-3}b_3a_4}, \sqrt[4]{a_1b_2b_3^{-3}a_4}, 0\right).
$$
Then we have
$$
\begin{aligned}
&A(x, y) = \inf_{r>0} \left(r^{-1}\sqrt[4]{a_1^{-1}b_2b_3a_4^{-1}} + r \sqrt[4]{a_1b_2^{-1}b_3^{-1}a_4}\right) =2,\\
&
\sum_{j=1}^4 a_j x_j + \sum_{j=1}^4 b_j y_j = 4 \sqrt[4]{a_1b_2b_3a_4},
\end{aligned}
$$
which yields $\mathcal L (\varrho, z) \le C(z) \sqrt[4]{a_1b_2b_3a_4}$.
Finally, we use
$$
x =
\left(0, \sqrt[4]{b_1 a_2^{-3}a_3b_4}, \sqrt[4]{b_1 a_2a_3^{-3}b_4}, 0\right),\quad
y =
\left(\sqrt[4]{b_1^{-3} a_2a_3b_4}, 0, 0, \sqrt[4]{b_1 a_2 a_3b_4^{-3}}\right)
$$
in the exactly same way, to get the inequality $\mathcal L (\varrho, z) \le C(z) \sqrt[4]{b_1a_2a_3b_4}$.
\end{proof}

Now, we prove that the inequality  {\rm (\ref{Guhne})} is equivalent to separability of
GHZ diagonal states. The proof is similar to that of Proposition \ref{general} if we use
Theorem \ref{GHZ-witness} instead of Proposition \ref{X-witness}.

\begin{theorem}\label{main-sep}
Let $\varrho=X(a,a,c)$ be a three qubit GHZ diagonal state. Then, $\varrho$ is separable if and only if the
inequality {\rm (\ref{Guhne})} holds for every $z \in \mathbb R^4$.
\end{theorem}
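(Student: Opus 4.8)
The plan is to dispose of the `only if' direction immediately from Theorem \ref{guhne}: a separable $\varrho$ satisfies {\rm (\ref{Guhne})} for all $z\in\mathbb C^4$, hence in particular for all $z\in\mathbb R^4$, and for a GHZ diagonal state $\Delta_\varrho=\min_i a_i$. So all the real work is in the `if' direction, and the strategy mirrors the converse part of Proposition \ref{general}, except that Theorem \ref{GHZ-witness} now lets me test $\varrho$ against the much smaller family of GHZ diagonal witnesses. Thus I would reduce, via Theorem \ref{GHZ-witness}, to proving $\lan\varrho,W\ran\ge 0$ for every GHZ diagonal witness $W=X(s,s,u)$ with $u\in\mathbb R^4$; by Proposition \ref{XEW} such a $W$ is exactly a non-positive matrix satisfying $A(s,s)\ge B(u)$.

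The decisive simplification comes from the constraint $s=t$. First I would note that pairing $|000\ran\lan 000|$ and its companions against $W$ forces $s\in\mathbb R^4_+$, so the formula for $A(s,s)$ applies. Then, because the two factors under each square root in $A(s,t)$ coincide when $s=t$, each radicand becomes a perfect square minimized at $r=1$, and both radicals are minimized there simultaneously; this collapses the infimum to the clean identity $A(s,s)=s_1+s_2+s_3+s_4$. This identity is what makes G\"uhne's single number $\Delta_\varrho=\min_i a_i$ sharp enough, in contrast with the full two-parameter condition of Proposition \ref{general}.

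With $A(s,s)=\sum_i s_i$ in hand, I would compute the pairing as $\lan\varrho,W\ran=2\big(\sum_i a_i s_i+\sum_i c_i u_i\big)$ and bound the two sums separately. Since every $s_i\ge 0$, the diagonal sum satisfies $\sum_i a_i s_i\ge \Delta_\varrho\sum_i s_i=\Delta_\varrho\,A(s,s)$. For the anti-diagonal sum, I would read $-\sum_i c_i u_i$ as $\mathcal L(\varrho,-u)$, apply the hypothesis {\rm (\ref{Guhne})} with $z=-u\in\mathbb R^4$ to get $-\sum_i c_i u_i\le C(-u)\,\Delta_\varrho$, and then identify $C(-u)=C(u)=B(u)$ using Proposition \ref{3to1} together with the invariance of $B$ under negation of its real argument. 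Combining with the witness inequality $B(u)\le A(s,s)$ yields $-\sum_i c_i u_i\le A(s,s)\,\Delta_\varrho\le\sum_i a_i s_i$, hence $\lan\varrho,W\ran\ge 0$, which is what Theorem \ref{GHZ-witness} demands.

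The hard part will not be any single estimate but getting the dictionary between $\mathcal L$, $C$, $B$ and the witness normalization exactly right: tracking the conjugation on the fourth coordinate (the $z_4\mapsto\bar z_4$ of Proposition \ref{3to1}), confirming $B(-u)=B(u)$ and $C(u)=B(u)$ for real arguments, and checking that the reduction to real $z$ is legitimate. The conceptual crux, however, is the identity $A(s,s)=\sum_i s_i$ paired with the observation $\sum_i a_i s_i\ge(\min_i a_i)\sum_i s_i$: this is precisely the statement that, among GHZ diagonal witnesses, the extremal case concentrates all diagonal weight on the smallest $a_i$, which is exactly what $\Delta_\varrho=\min_i a_i$ measures.
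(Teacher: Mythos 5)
Your proposal is correct and takes essentially the same route as the paper's own proof: reduce via Theorem \ref{GHZ-witness} to GHZ diagonal witnesses, invoke the characterization $A(s,s)\ge B(u)$ of Proposition \ref{XEW}, observe that both radicals in $A(s,s)$ are minimized simultaneously at $r=1$ so that $A(s,s)=s_1+s_2+s_3+s_4$, and conclude from $\sum_i a_i s_i\ge(\min_i a_i)\sum_i s_i$ together with the hypothesis {\rm (\ref{Guhne})}. The only differences are cosmetic: you make explicit that a witness must have nonnegative diagonal (which the paper leaves implicit in Proposition \ref{XEW}), and you replace the paper's normalization $A(x,x)\ge 1\ge C(z)$ by a direct chain of inequalities.
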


\begin{proof}
It remains to prove the \lq if\rq\ part, and it
suffices to show that $\lan \varrho, W \ran \ge 0$ for every GHZ diagonal witness $W$ by Theorem \ref{GHZ-witness}.
We put $W=X(x,x,-z)$.
We may assume without loss of generality that $A(x, x) \ge 1 \ge C(z)$ and $\mathcal L (\varrho, z) \ge 0$.
We have
$$
\begin{aligned}
A(x, x) & = \inf_{r>0} \sqrt{(x_1 r^{-1}+x_4r)(x_4r^{-1}+x_1r)} +\sqrt{(x_2r^{-1}+x_3r)(x_3r^{-1}+x_2r)} \\
& = \inf_{r>0} \sqrt{x_1 x_4 (r^{-2}+r^2)+x_1^2+x_4^2} +\sqrt{x_2 x_3 (r^{-2}+r^2)+x_2^2+x_3^2} \\
& = x_1+x_2+x_3+x_4.
\end{aligned}
$$
Note that the minimums of two terms occur simultaneously when $r=1$.
Therefore, it follows that
$$
\begin{aligned}
\lan \varrho, W \ran & = 2 \sum_{i=1}^4 a_i x_i - 2 \mathcal L (\varrho, z) \\
& \ge 2 {\sum_{i=1}^4 a_i x_i \over A(x, x)} - 2 {\mathcal L (\varrho, z) \over C(z)} \\
& = 2 \sum_{i=1}^4 a_i {x_i \over x_1+x_2+x_3+x_4} - 2 {\mathcal L (\varrho, z) \over C(z)} \\
& \ge 2 \min \{a_1,a_2,a_3,a_4\} - 2 {\mathcal L (\varrho, z) \over C(z)}
 \ge 0.
\end{aligned}
$$
This completes the proof.
\end{proof}

\section{Characterization of separability by entries}

In order to characterize separability of three qubit GHZ
diagonal states in terms of entries, it suffices to find the maximum of the function
$$
f(z_1,z_2,z_3,z_4):=\frac{c_1z_1+c_2z_2+c_3z_3+c_4z_4}{C(z_1,z_2,z_3,z_4)}
=\frac{c_1z_1+c_2z_2+c_3z_3+c_4z_4}
{\max_\theta\left(|z_1e^{{\rm i}\theta} + z_4|+|z_2 e^{{\rm i}\theta} + z_3|\right)}
$$
of real variables,
by Theorem \ref{main-sep}.
We note that $f$ is a continuous function defined on the domain $\mathbb
R^4\setminus \{0\}$, and enjoys the relations
\begin{equation}\label{ruled}
f(\lambda z)=f(z)~~ \text{for each}~ \lambda>0 \qquad \text{and} \qquad f(-z)=-f(z).
\end{equation}
Hence, $f$ has the maximum which occurs on the compact set $\Delta=\{z: \sum_{i=1}^4|z_i|=1\}$.
In order to find this maximum, we first describe the function $C$ in terms of real variables $z_i$'s.

Because the function $C(z_1,z_2,z_3,z_4)$ is invariant under replacing two $z_i$'s
by $-z_i$, it suffices to consider two cases $z_1,z_2,z_3,z_4 \ge 0$ and $z_1<0$, $z_2,z_3,z_4>0$.
In the former case, it is trivial to see that $C(z)=\sum_{i=1}^4|z_i|$.
For the latter case,
we put $x=\cos \theta$ and define the function $g : [-1,1] \to \mathbb R$ by
$$
g(x)=\sqrt{z_1^2+z_4^2-2|z_1|z_4 x} + \sqrt{z_2^2+z_3^2+2z_2z_3 x},\qquad x\in [-1,1],
$$
whose maximum is $C(z)$. By one variable calculus, one may check that
$$
\begin{aligned}
g'(x) \ge 0 &\Longleftrightarrow z_2z_3\sqrt{z_1^2+z_4^2-2|z_1|z_4x} \ge |z_1|z_4\sqrt{z_2^2+z_3^2+2z_2z_3x}\\
&\Longleftrightarrow x \le {z_2^2z_3^2(z_1^2+z_4^2)-z_1^2z_4^2(z_2^2+z_3^2) \over 2|z_1|z_2z_3z_4(|z_1|z_4+z_2z_3)} =: \alpha.
\end{aligned}
$$
If $-1 < \alpha < 1$, then
$g$ takes the maximum at $\alpha$ with
$$
\begin{aligned}
C(z)
=g(\alpha)&= \left(\sqrt{{z_2z_3 \over |z_1|z_4}} + \sqrt{|z_1|z_4
\over
z_2z_3}\right)\sqrt{(|z_1|z_2+z_3z_4)(|z_1|z_3+z_2z_4)
\over |z_1|z_4+z_2z_3}\\
&=
\sqrt{\frac{(z_1z_2-z_3z_4)(z_2z_4-z_1z_3)(z_1z_4-z_2z_3)}{-z_1z_2z_3z_4}}.
\end{aligned}
$$
We also have the following:
$$
\alpha \le -1 \Longleftrightarrow {1 \over |z_1|} + {1 \over z_4} \le \left|{1 \over z_2}-{1 \over z_3}\right|,
\qquad \alpha \ge 1 \Longleftrightarrow {1 \over z_2} + {1 \over z_3} \le \left|{1 \over |z_1|}-{1 \over z_4}\right|.
$$
In the first case, we have $C(z)=g(-1)=|z_1|+z_4+|z_2-z_3|$. In the second case,
we have $C(z)=g(1)=||z_1|-z_4|+z_2+z_3$.

We partition the domain of $f$ by $\mathbb R^4\setminus\{0\}=\Omega^+\sqcup\Omega^-$ with
$$
\Omega^+=\{z\in \mathbb R^4\setminus\{0\}: z_1z_2z_3z_4\ge 0\},\ \
\Omega^-=\{z\in \mathbb R^4\setminus\{0\}: z_1z_2z_3z_4< 0\}.
$$
We also partition $\Omega^-$ into the five regions $\Omega^-_0,\Omega^-_1,\Omega^-_2,\Omega^-_3,\Omega^-_4$:
$$
\begin{aligned}
\Omega^-_i&=\left\{ z\in\Omega^-: \textstyle{ \frac
1{|z_i|} \ge \sum_{j\neq i}\frac 1{|z_j|}}\right\},\qquad i=1,2,3,4,\\
\Omega^-_0&=\Omega^-\setminus\left(\sqcup_{i=1}^4\Omega^-_i\right)
=\left\{z\in\Omega^-: \textstyle{\frac 1{|z_i|} < \sum_{j\neq
i}\frac 1{|z_j|}},\ i=1,2,3,4\right\}.
\end{aligned}
$$
Note that $z\in\Omega^-$ belongs to $\Omega^-_0$ if and only if the numbers
$\frac 1{|z_i|}$ with $i=1,2,3,4$ make a quadrangle.
In the case of $z_1<0$ and $z_2,z_3,z_4>0$, we have $\alpha \ge 1$ for $z \in \Omega_1^-$ or
$z \in \Omega_4^-$, while $\alpha \le -1$ for $z \in \Omega_2^-$ or $z \in \Omega_3^-$. In each case, $C(z)$ is equal to
$$
-|z_1|+z_2+z_3+z_4,\quad |z_1|+z_2+z_3-z_4,\quad |z_1|-z_2+z_3+z_4,\quad |z_1|+z_2-z_3+z_4,
$$
respectively.
We also have
\begin{align}\label{-1<1}
-1<\alpha<1 &\Longleftrightarrow {1 \over |z_1|} + {1 \over |z_4|} > \left|{1 \over |z_2|}-{1 \over |z_3|}\right|,~
{1 \over |z_2|} + {1 \over |z_3|} > \left|{1 \over |z_1|}-{1 \over |z_4|}\right|\\
&\Longleftrightarrow z \in \Omega_0^- \nonumber.
\end{align}
We summarize as follows:

\begin{proposition}\label{C(z)}
For $z \in \mathbb R^4\setminus\{0\}$, we have
$$
C(z)=
\begin{cases}
|z_1|+|z_2|+|z_3|+|z_4|, \qquad  &z\in\Omega^+, \\
|z_1|+|z_2|+|z_3|+|z_4|-2|z_i|,\qquad &z\in \Omega^-_i\ (i=1,2,3,4),\\
\sqrt{\frac{(z_1z_2-z_3z_4)(z_2z_4-z_1z_3)(z_1z_4-z_2z_3)}{-z_1z_2z_3z_4}},
\qquad & z\in \Omega^-_0.
\end{cases}
$$
\end{proposition}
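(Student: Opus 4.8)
The plan is to evaluate $C(z)=\max_\theta\left(|z_1 e^{{\rm i}\theta}+z_4|+|z_2 e^{{\rm i}\theta}+z_3|\right)$, the real-variable form of $C$ furnished by Proposition~\ref{3to1}, and to extract the three cases from a single one-variable maximization after cutting down the number of sign patterns. First I would record the sign symmetries of $C$: simultaneously reversing $z_1,z_4$, or simultaneously reversing $z_2,z_3$, leaves both summands unchanged, whereas simultaneously reversing $z_1,z_2$ is undone by the substitution $\theta\mapsto\theta+\pi$. Together these three paired flips generate every even sign change of $(z_1,z_2,z_3,z_4)$, so $C$ is invariant under reversing the signs of any two coordinates. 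An even sign change preserves the sign of $z_1z_2z_3z_4$ and can relocate a lone negative entry to any prescribed slot, so every $z$ reduces to one of two representatives: a point of $\Omega^+$ with all $z_i\ge 0$, or a generic point of $\Omega^-$ with $z_1<0<z_2,z_3,z_4$.

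For the representative in $\Omega^+$ the triangle inequality gives $C(z)\le|z_1|+|z_2|+|z_3|+|z_4|$, with equality at $\theta=0$, which is the first line of the proposition. For $z_1<0<z_2,z_3,z_4$ I would set $x=\cos\theta$ and rewrite the objective as the function $g$ on $[-1,1]$ introduced above. Each summand of $g$ is the square root of an affine function of $x$ that is nonnegative on $[-1,1]$, so $g$ is a sum of concave functions and therefore concave; consequently its maximum is attained at the interior stationary point $\alpha$ when $-1<\alpha<1$, at $x=1$ when $\alpha\ge 1$, and at $x=-1$ when $\alpha\le-1$. The stationarity analysis already carried out pins down $\alpha$, and evaluating $g$ at $\alpha$ yields, after simplification, the closed form recorded for $\Omega^-_0$.

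The delicate step, and the one I expect to be the main obstacle, is identifying each alternative with a region and simplifying the two boundary values. Rewriting $\alpha\ge 1$ and $\alpha\le-1$ in the reciprocal form above, I would check that for $z_1<0<z_2,z_3,z_4$ the inequality $\alpha\ge1$ amounts to $z\in\Omega^-_1\cup\Omega^-_4$, that $\alpha\le-1$ amounts to $z\in\Omega^-_2\cup\Omega^-_3$, and that $-1<\alpha<1$ characterizes $\Omega^-_0$. In $\Omega^-_1$, for example, the defining inequality $\tfrac1{|z_1|}\ge\tfrac1{z_2}+\tfrac1{z_3}+\tfrac1{z_4}$ forces $|z_1|<z_4$, so $g(1)=\bigl||z_1|-z_4\bigr|+z_2+z_3=-|z_1|+z_2+z_3+z_4$, matching $|z_1|+|z_2|+|z_3|+|z_4|-2|z_1|$; the other three boundary regions are handled identically. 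Finally, all four boundary expressions, together with the $\Omega^-_0$ formula, are visibly invariant under even sign changes, so the values computed for the representative transport to the whole of each region; combined with the $\Omega^+$ case this gives the three stated formulas.
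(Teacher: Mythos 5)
Your proposal is correct and follows essentially the same route as the paper: reduce by the even sign-change invariance of $C$ to the two representative patterns $z_i\ge 0$ and $z_1<0<z_2,z_3,z_4$, substitute $x=\cos\theta$ to get the one-variable problem for $g$, and place the maximum at $\alpha$, $x=1$, or $x=-1$ according to the reciprocal inequalities that define $\Omega^-_0$ and the $\Omega^-_i$. Your additions --- generating all even sign changes from the three pair-flips, the concavity remark, and the explicit resolution of $\bigl||z_1|-z_4\bigr|$ inside $\Omega^-_1$ --- merely make explicit steps the paper asserts or leaves to the reader.
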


We also partition $\Omega^-$ into the eight domains $\Omega^\sigma$, where $\sigma$ is one of:
$$
-+++,\
+-++,\
++-+,\
+++-,\
+---,\
-+--,\
--+-,\
---+,
$$
according to the signs of $z_1,z_2,z_3$ and $z_4$. We put
$$
\Omega^\sigma_i:=\Omega^-_i\cap\Omega^\sigma,\qquad i=0,1,2,3,4.
$$
In order to understand the global behavior of the function $f$,
we consider the boundary
$$
B_i^\sigma := \left\{z \in \Omega^\sigma : {1 \over |z_i|} = \sum_{j \ne i} {1 \over |z_j|} \right\}
$$
between $\Omega^\sigma_0$ and $\Omega^\sigma_i$
for $i=1,2,3,4$.
For example, $B_1^{-+++}$
consists of $z\in\mathbb R^4$ satisfying
$$
z_1<0,\ z_2,z_3,z_4>0,\qquad z_1=-\frac{z_2z_3z_4}{z_2z_3+z_3z_4+z_4z_2}.
$$
If we fix two of $z_2,z_3,z_4$ and take another $z_i\to 0^+$, then we
have $z_1\to 0^-$. This means that every point on the plane
$z_1=z_2=0$, $z_1=z_3=0$ or $z_1=z_4=0$ with $z_i\ge 0\
(i=2,3,4)$ is a limit point of $B^{-+++}_1$.

By the relation (\ref{ruled}), we may restrict the domain of $f$ on the set $\Delta$.
Then the intersection $\Omega^\sigma\cap\Delta$ with $\Omega^\sigma$ is a tetrahedron,
and $B^\sigma_i\cap\Delta$ is a two-dimensional surface in the tetrahedron.
For example, the set $\Omega^{-+++} \cap \Delta$ is the tetrahedron with four
extreme points $-E_1, E_2, E_3$ and $E_4$, and the surface $B^{-+++}_1\cap\Delta$ consists of
$z\in\mathbb R^4$ satisfying
$$
z_1<0,\ z_2,z_3,z_4>0,\ {1 \over z_1}+{1 \over z_2}+{1 \over z_3}+{1 \over z_4}=0,\ -z_1+z_2+z_3+z_4=1.
$$
Limit points of the surface
include the three edges of the face $F_1$ with extreme points
$E_2,E_3,E_4$. This surface is near from the face $F_1$, and far from
the extreme point $-E_1$. The maximum distance from the surface $B^{-+++}_1\cap\Delta$ to
the face $F_1$ is given by
$$
{\textstyle{\|(-\frac 1{10},\frac 3{10}, \frac 3{10}, \frac 3{10})-(0,\frac 13,\frac 13,\frac 13)\|}}
=\frac 1 {5\sqrt 3}.
$$
The each piece of the region $\Omega^\sigma_0$ in $\Delta$ occupies the central part of
the corresponding tetrahedron, and touches all the edges and extreme points of the tetrahedron, but
is apart from the four faces with dimension two.

Since $f$ is a fraction of linear functions
on each $\Omega^\sigma_i$ with $i=1,2,3,4$, all its partial derivatives never vanish on them, and so $f$
has no extreme value on the interior of $\sqcup_{i=1}^4\Omega^-_i$.
Therefore, we see that the maximum value of $f$ occurs on
$\Omega^+$, $B_i^\sigma$ or $\Omega^-_0$. If the maximum occurs on $\Omega^+$
then we have $\max f(z)=\max_i|c_i|$, in which case the separability of $\varrho$ is equivalent to
the condition of PPT by Theorem \ref{main-sep}.

Now, we consider the function $f(z)$ with the third expression of $C(z)$, and
suppose that $s=(s_1,s_2,s_3,s_4)\in\Omega^-_0$ is a critical point of $f$.
With the notation $t_i=1 \slash s_i$, we have
\begin{equation}\label{tttt}
\begin{aligned}
t_1&=c_1(-c_1^2+c_2^2+c_3^2+c_4^2)-2c_2c_3c_4,\\
t_2&=c_2(+c_1^2-c_2^2+c_3^2+c_4^2)-2c_1c_3c_4,\\
t_3&=c_3(+c_1^2+c_2^2-c_3^2+c_4^2)-2c_1c_2c_4,\\
t_4&=c_4(+c_1^2+c_2^2+c_3^2-c_4^2)-2c_1c_2c_3,\\
\end{aligned}
\end{equation}
up to nonzero scalar multiplications. See Appendix for the details. One may also check
$$
\begin{aligned}
f(s)^2&={4(c_1c_2-c_3c_4)(c_1c_3-c_2c_4)(c_1c_4-c_2c_3) \over (c_1+c_2+c_3+c_4)(c_1+c_2-c_3-c_4)(c_1-c_2-c_3+c_4)(c_1-c_2+c_3-c_4)}\\
&=\frac{(\lambda_5\lambda_6+\lambda_7\lambda_8)(\lambda_5\lambda_7+\lambda_6\lambda_8)(\lambda_5\lambda_8+\lambda_6\lambda_7)}
{8^2\lambda_5\lambda_6\lambda_7\lambda_8} 
\end{aligned}
$$
which appears in the sufficient condition of G\"uhne (\ref{sufficient}).
We are going to look for the condition with which the critical point $s=(s_1,s_2,s_3,s_4)$ belongs to
$\Omega^-_0$. To do this, we note by calculation the following relations between $t_i$ and $\lambda_j$:
$$
\begin{aligned}
(t_2+t_3)^2-(t_1+t_4)^2&={1 \over 2^6}\lambda_5\lambda_8(\lambda_6\lambda_7)^2,\\
(t_1-t_4)^2-(t_2-t_3)^2&={1 \over 2^6}\lambda_6\lambda_7(\lambda_5\lambda_8)^2.
\end{aligned}
$$
Since $s=(s_1,s_2,s_3,s_4)\in\Omega^-$, we have $(t_1t_4)(t_2t_3)<0$.
We first consider the case $t_1t_4>0$ and $t_2t_3<0$. In this case, we have
$$
\begin{aligned}
|t_2|+|t_3| > ||t_1|-|t_4||\ &\Longleftrightarrow\
(t_2-t_3)^2 > (t_1-t_4)^2\ \Longleftrightarrow\
\lambda_6\lambda_7 < 0,\\
|t_1|+|t_4| > ||t_2|-|t_3||\ &\Longleftrightarrow\
(t_1+t_4)^2 > (t_2+t_3)^2\ \Longleftrightarrow\
\lambda_5\lambda_8 < 0.
\end{aligned}
$$
In case of $t_1t_4<0$ and $t_2t_3>0$, we also have
$$
\begin{aligned}
|t_2|+|t_3| > ||t_1|-|t_4||\ &\Longleftrightarrow\
(t_2+t_3)^2 > (t_1+t_4)^2\ \Longleftrightarrow\
\lambda_5\lambda_8 > 0,\\
|t_1|+|t_4| > ||t_2|-|t_3||\ &\Longleftrightarrow\
(t_1-t_4)^2 > (t_2-t_3)^2\ \Longleftrightarrow\
\lambda_6\lambda_7 > 0.
\end{aligned}
$$
By (\ref{-1<1}), we see that $s\in\Omega^-_0$ if and only if
$$
t_1t_4>0,\ t_2t_3<0,\ \lambda_6\lambda_7 < 0,\ \lambda_5\lambda_8 < 0\ \
{\text{\rm or}}\ \
t_1t_4<0,\ t_2t_3>0,\ \lambda_6\lambda_7 > 0,\ \lambda_5\lambda_8 > 0
$$
if and only if
$\lambda_5\lambda_6\lambda_7\lambda_8 > 0$
and
\begin{equation}\label{cond-entry}
t_1t_4\lambda_6\lambda_7 < 0 \quad {\text{\rm and}}\quad t_2t_3\lambda_5\lambda_8 > 0.
\end{equation}
Summing up, we have seen that
the function $f$ with the third expression of $C(z)$ has a critical value on  $\Omega^-_0$
if and only if the inequality
$\lambda_5\lambda_6\lambda_7\lambda_8 > 0$ together with the condition {\rm (\ref{cond-entry})}
holds. Now, we are ready to state and prove the following, which correct a result in \cite{chen_qip_2015}.

\begin{theorem}\label{main_condition}
Let $\varrho=X(a,a,c)$ be a GHZ diagonal state. Suppose that $\lambda_5,\lambda_6,\lambda_7,\lambda_8$ and
$t_1,t_2,t_3,t_4$ are given by {\rm (\ref{lambdas})} and {\rm (\ref{tttt})}, respectively. Then we have the following:
\begin{enumerate}
\item[(i)]
if $\lambda_5\lambda_6\lambda_7\lambda_8 \le 0$,
then $\varrho$ is separable if and only if it is of PPT;
\item[(ii)]
if $\lambda_5\lambda_6\lambda_7\lambda_8 > 0$ and {\rm (\ref{cond-entry})} does not hold, then $\varrho$ is separable if and only if
it is of PPT;
\item[(iii)]
if $\lambda_5\lambda_6\lambda_7\lambda_8 > 0$ and {\rm (\ref{cond-entry})} holds, then $\varrho$ is separable if and only if
the inequality {\rm (\ref{sufficient})} holds.
\end{enumerate}
\end{theorem}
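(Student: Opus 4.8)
The plan is to read off the theorem from the location of the maximum of $f$ over $\mathbb R^4\setminus\{0\}$. By Theorem \ref{main-sep} (with $\mathcal L(\varrho,z)=\sum_i c_iz_i$ and $\Delta_\varrho=\min_i a_i$ for a GHZ diagonal $\varrho$), separability of $\varrho$ is equivalent to the single inequality $\max_{z\ne 0} f(z)\le \min_i a_i$. Since the value $\max_i|c_i|$ is already attained at the vertices $z=\pm E_i\in\Omega^+$, we always have $\max f\ge \max_i|c_i|$, and---as recorded just before the theorem---whenever the maximum is attained on $\Omega^+$ one has $\max f=\max_i|c_i|$, in which case separability is equivalent to PPT. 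Thus the whole statement reduces to deciding, according to the sign data, whether $\max f=\max_i|c_i|$ (the PPT alternative) or whether a strictly larger value is produced in the curved region $\Omega^-_0$.

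To control the candidates for the maximum I would first record that $C$ is convex: by Proposition \ref{3to1} it is the supremum over $\theta$ of the functions $z\mapsto |z_1e^{{\rm i}\theta}+z_4|+|z_2e^{{\rm i}\theta}+z_3|$, each of which is a norm of $z\in\mathbb R^4$, so $C$ is itself a norm. Consequently $\max f=\sup\{\,\sum_i c_iz_i : C(z)\le 1\,\}$ is the value of a linear form maximized over the convex body $\{C\le 1\}$, and the maximizer lies on its boundary. By Proposition \ref{C(z)} this boundary consists of the flat facets carried by $\Omega^+$ and by the $\Omega^-_i$ $(i=1,2,3,4)$, together with the curved part carried by $\Omega^-_0$. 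We already know that $f$ has no extreme value in the interior of any $\Omega^-_i$ with $i\ge 1$, and that $f$ admits a critical value on $\Omega^-_0$ precisely when $\lambda_5\lambda_6\lambda_7\lambda_8>0$ and {\rm (\ref{cond-entry})} hold, at which point $f(s)$ equals the right-hand side of {\rm (\ref{sufficient})}.

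With these facts the three cases fall out as follows. In cases (i) and (ii) there is no critical value in $\Omega^-_0$, so the maximizer cannot sit in the curved interior; having also excluded the interiors of the $\Omega^-_i$, the maximum is forced onto the $\Omega^+$ facets, where $\max f=\max_i|c_i|$, so separability is equivalent to PPT. In case (iii) the critical point $s\in\Omega^-_0$ exists; since $C$ is strictly convex on the curved region, the supporting hyperplane of $\{C\le 1\}$ with outer normal $c$ touches it at the single interior direction $s$, whence $\max f=f(s)$. Then separability is equivalent to $\min_i a_i\ge f(s)$, that is, to {\rm (\ref{sufficient})}; note that $f(s)\ge\max_i|c_i|$ holds automatically, so {\rm (\ref{sufficient})} is indeed stronger than PPT in this case, as it must be.

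The hard part will be excluding the ridge surfaces $B^\sigma_i$---the common boundaries between the flat $\Omega^-_i$ facets and the curved $\Omega^-_0$ part---as possible locations of the maximum in cases (i) and (ii), and confirming in (iii) that $s$ is a genuine global maximum rather than a saddle. I would again route this through the convexity of $C$: at a point of $B^\sigma_i$ the subdifferential of $C$ is the segment joining the constant gradient of the adjacent flat facet to the curved-region gradient, so $c$ can be an outer normal there only for $c$ in a lower-dimensional cone, and in that degenerate event the value of $f$ coincides with its value on the adjacent flat facet, whose supremum is the vertex value $\max_i|c_i|$. Equivalently, and more concretely, one can use that the relative boundary of each $B^\sigma_i\cap\Delta$ accumulates only on the coordinate hyperplanes $\{z_j=0\}\subset\Omega^+$, so that $\sup_{B^\sigma_i}f$ is governed by the $\Omega^+$ values and never exceeds $\max(\max_i|c_i|,\,f(s))$.
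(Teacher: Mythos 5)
Your dual--norm reformulation is a genuinely different route from the paper's, and parts of it are sound and even attractive. The reduction through Theorem \ref{main-sep} to the single inequality $\max_{z\neq 0}f(z)\le\min_i a_i$ is exactly the paper's starting point; the observation that $C$ is a norm (each fixed-$\theta$ function is in fact only a seminorm, but the supremum is a norm) is correct and not in the paper; and it pays off in case (iii): at a critical point $s\in\Omega^-_0$ with $f(s)>0$ one has $\nabla C(s)$ parallel to $c$, so by convexity the tangent hyperplane at $s/C(s)$ is automatically a \emph{supporting} hyperplane of $\{C\le 1\}$, which gives $\max f=f(s)$ at once. Note that this needs no strict convexity of $C$ on $\Omega^-_0$ --- which you assert but do not prove, and which is the only way you pin the maximizer to $s$ --- and that $f(s)\ge\max_i|c_i|$ then indeed comes for free (the paper instead verifies it via the identity $f(s)^2=c_i^2+2^4t_i^2/\lambda_5\lambda_6\lambda_7\lambda_8$). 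The paper handles all of this by calculus rather than convex geometry, so up to here your proposal is a legitimate alternative.

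The genuine gap is exactly where you locate the ``hard part,'' the ridges $B^\sigma_i$, and neither of your two proposed fixes works. First, your geometric picture there is wrong: the ball $\{C\le1\}$ has no crease along $B^\sigma_i$. Writing $C(z)=\max_{x\in[-1,1]}g(x)$ as in Section 5, the maximizing $x$ sits at the endpoint $x=1$ on the flat side and at the interior point $\alpha(z)$ on the curved side, with $\alpha(z)\to1$ at the ridge; by the envelope theorem both one-sided gradients equal $\nabla_z g$ evaluated at $x=1$, which is precisely the facet normal (e.g.\ $(1,1,1,1)$ for $B^{-+++}_1$). So the subdifferential at a ridge point is a singleton, not the nondegenerate segment you describe, and establishing this $C^1$ matching is exactly the nontrivial work your argument must do. Second, even granting your segment picture, the inference fails: a $2$-dimensional normal cone at each point of the $2$-parameter family of ridge points can sweep out a full-dimensional set of directions $c$, so ``lower-dimensional cone at each point'' does not make the bad set of $c$'s negligible; and for such $c$ the maximum would be attained \emph{only} on the ridge, with value $\langle c,z^*\rangle$ a priori different from both $\max_i|c_i|$ and $f(s)$, so the theorem's dichotomy itself would not follow --- if the ball really had a crease there, the statement would be in danger, which shows smoothness is not a technicality. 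Third, your ``more concrete'' fallback, that $\sup_{B^\sigma_i}f$ is governed by its boundary values because the relative boundary of $B^\sigma_i\cap\Delta$ accumulates on coordinate hyperplanes, is a maximum-principle claim with no justification: a continuous function on a surface can easily have interior local maxima exceeding all boundary values. Compare with the paper's treatment: it runs segments $z(t)$ from $-E_1$ to the opposite face and computes that the one-sided derivatives of $f_2(z(t))$ and $f_3(z(t))$ at the crossing time $t_0$ are both \emph{positive multiples of} $c_1-c_2\omega_2-c_3\omega_3-c_4\omega_4$, so a ridge point can be a local maximum of $f$ only when that expression vanishes, in which case $f=c_1$ there. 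That computation (or an honest proof of the $C^1$ matching above) is the missing ingredient in your proposal.
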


\begin{proof}
We first consider the behavior of the function $f$ on the surface $B^\sigma_i\cap\Delta$, say $B^{-+++}_1\cap\Delta$.
To do this, we denote by 
$f_2$ and $f_3$ the function $f$ with the second and third expression of $C(z)$ in Proposition \ref{C(z)}.
We take the line segment
$$
z(t)=(1-t)(-E_1)+t(0,\omega_2,\omega_3,\omega_4)=(t-1,t\omega_2,t\omega_3,t\omega_4),\qquad 0\le t\le 1
$$
between the extreme point $-E_1$ and a point $(0,\omega_2,\omega_3,\omega_4)$ on the opposite face $F_1$,
with $\omega_2+\omega_3+\omega_4=1$ and $\omega_i\ge 0$.
The line segment meets the surface $B^{-+++}_1\cap\Delta$ at
$$
t_0=\frac{\omega_2\omega_3+\omega_3\omega_4+\omega_4\omega_2}{\omega_2\omega_3\omega_4+\omega_2\omega_3+\omega_3\omega_4+\omega_4\omega_2}.
$$
One may check that the derivatives of $f_2(z(t))$ and $f_3(z(t))$ at $t=t_0$
are positive scalar multiples of $c_1-c_2\omega_2-c_3\omega_3-c_4\omega_4$.
Therefore, we see that
the function $f$ has no extreme value on the surface $B^{-+++}_1\cap\Delta$ possibly except on the curve
$$
\{(t_0-1,t_0\omega_2,t_0\omega_3,t_0\omega_4) : c_2\omega_2+c_3\omega_3+c_4\omega_4=c_1,~ \omega_2+\omega_3+\omega_4=1, \omega_i \ge 0\},
$$
where $f$ has the constant value $c_1$. The other cases can be handled in the same way to see that possible extreme values are
$\pm c_i$.

Suppose that the assumption of (i) or (ii) holds.
Then we see by the above argument that the global maximum of $f$ is $\max|c_i|$.
Therefore, we have the required results by Theorem \ref{main-sep}.
In case of (iii), we know that the maximum of $f$ is $\max|c_i|$ or the number in the inequality (\ref{sufficient}).
One may check that the square of this number is equal to $c_i^2 +{2^4 t_i^2 \over \lambda_5\lambda_6\lambda_7\lambda_8}$
for each $i=1,2,3,4$. Therefore, we conclude that the maximum of $f$
is the number in (\ref{sufficient}), when $\lambda_5\lambda_6\lambda_7\lambda_8>0$.
This completes the proof by Theorem \ref{main-sep} again.
\end{proof}

The case (i) is just Kay's criterion. The case (iii) tells us that G\"uhne's sufficient condition
is also necessary under the condition (\ref{cond-entry}).
The case (ii) is most interesting. This case shows that separability of $\varrho$ may be equivalent
to PPT even though the Kay's condition $\lambda_5\lambda_6\lambda_7\lambda_8< 0$ is not satisfied.
This case also provides examples of separable states violating G\"uhne's sufficient condition.

Motivated by Kay's example in \cite{kay_2011}, we consider the family of GHZ diagonal states
$$
\varrho_{p,q}=\frac 18X({\bf 1},{\bf 1}, (p,p,q,p))
$$
where ${\bf 1}=(1,1,1,1)$. This is really a state if and only if $\max\{|p|,|q|\}\le 1$
if and only if it is of PPT. In this case, we have
$$
\begin{aligned}
&t_1=t_2=t_4={1 \over 8^3}p(p-q)^2,\quad t_3=-{1 \over 8^3}(2p+q)(p-q)^2,\\
&\lambda_5={1 \over 4}(3p+q),\quad \lambda_6=\lambda_8={1 \over 4}(-p+q),\quad \lambda_7={1 \over 4}(p-q).
\end{aligned}
$$
Therefore, we have the following:
\begin{itemize}
\item
the case (i) occurs if and only if $(p-q)(3p+q) \le 0$,
\item
the case (ii) occurs if and only if $(p-q)(3p+q) > 0$ and $p(2p+q) \le 0$,
\item
the case (iii) occurs if and only if $(p-q)(3p+q) > 0$ and $p(2p+q) > 0$,
\item
the inequality {\rm (\ref{sufficient})} holds if and only if $(p-q)(3p+q)> 0$ and $\frac{4p^3}{3p+q}\le 1$.
\end{itemize}
See Figure 1.

\begin{figure}[t]
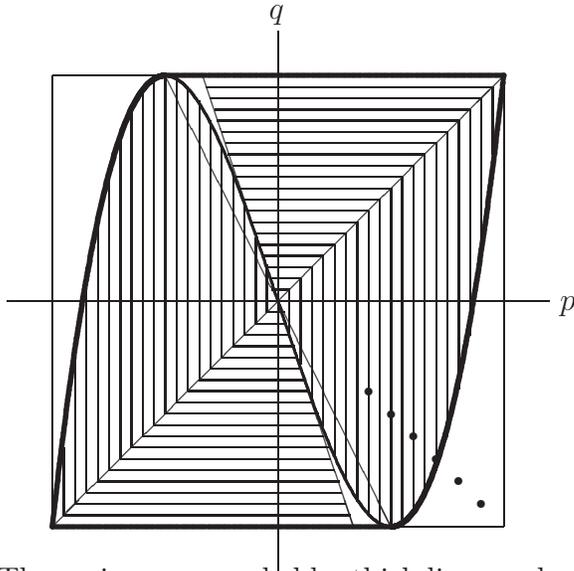

\centering
\input GHZ_figure.tex

\end{picture}
\end{center}
\caption{The region surrounded by thick lines and curves denotes separable states
among $\varrho_{p,q}$.
States outside of this curves in the box are PPT entanglement.
Especially, dots denote Kay's examples of one parameter family to find PPT entanglement among GHZ diagonal states.
Kay's criterion works on the parts with horizontal lines.
The parts with vertical lines represent separable states satisfying G\"uhne's sufficient condition.
States in remaining parts violate this condition even though they are separable and satisfy $\lambda_5\lambda_6\lambda_7\lambda_8>0$.
}
\end{figure}

In multi-partite systems, we note that there are various notions
of separability and entanglement
according to partitions of systems. For example, we say that a multi-partite state is fully bi-separable if it is
separable as a bi-partite state with respect to every bi-partition of systems.
Characterization of full bi-separability and corresponding witnesses are given for general multi-qubit {\sf X}-shaped states
\cite{{han_kye_EW}}. We note that only absolute values of anti-diagonal entries involved in these criteria.
On the other hand, the arguments of anti-diagonal entries play a key role in the characterization
of witnesses corresponding (full) separability, as we have seen in Proposition \ref{XEW}.
In this regard, it would be very interesting to get a characterization for separability
of general {\sf X}-shaped three qubit states when the anti-diagonals have complex entries.

\section{Appendix}

Suppose that $s=(s_1,s_2,s_3,s_4)$ is a critical point of the function
$$
f(z_1,z_2,z_3,z_4)=
\frac{c_1z_1+c_2z_2+c_3z_3+c_4z_4}
{\sqrt{\frac{(z_1z_2-z_3z_4)(z_2z_4-z_1z_3)(z_1z_4-z_2z_3)}{-z_1z_2z_3z_4}}}
$$
when $z\in \Omega^-_0$. From
$$
\log C(z) = {1 \over 2}(\log|z_1z_2-z_3z_4| + \log |z_2z_4-z_1z_3| + \log |z_1z_4-z_2z_3|-\log(-z_1z_2z_3z_4)),
$$
we get logarithmic derivatives
$$
\begin{aligned}
2{{\partial C \over \partial z_1} (z) \over C(z)} &= +{z_2 \over z_1z_2-z_3z_4}-{z_3 \over z_2z_4-z_1z_3}+{z_4 \over z_1z_4-z_2z_3}-{1 \over z_1},\\
2{{\partial C \over \partial z_2} (z) \over C(z)} &= +{z_1 \over z_1z_2-z_3z_4}+{z_4 \over z_2z_4-z_1z_3}-{z_3 \over z_1z_4-z_2z_3}-{1 \over z_2},\\
2{{\partial C \over \partial z_3} (z) \over C(z)} &= -{z_4 \over z_1z_2-z_3z_4}-{z_1 \over z_2z_4-z_1z_3}-{z_2 \over z_1z_4-z_2z_3}-{1 \over z_3},\\
2{{\partial C \over \partial z_4} (z) \over C(z)} &= -{z_3 \over z_1z_2-z_3z_4}+{z_2 \over z_2z_4-z_1z_3}+{z_1 \over z_1z_4-z_2z_3}-{1 \over z_4}.
\end{aligned}
$$
We also consider the logarithmic derivatives of $f$ at $s$ as
$$
0={{\partial f \over \partial z_i} (s) \over f(s)} = {c_i \over c_1s_1+c_2s_2+c_3s_3+c_4s_4}-{{\partial C \over \partial z_i} (s) \over C(s)}.
$$
Combining the above, we obtain simultaneous equations
\begin{align}
{2c_1 s_1 \over c_1s_1+c_2s_2+c_3s_3+c_4s_4}&=+{s_1s_2 \over s_1s_2-s_3s_4}-{s_1s_3 \over s_2s_4-s_1s_3}+{s_1s_4 \over s_1s_4-s_2s_3}-1, \label{1}\\
{2c_2 s_2 \over c_1s_1+c_2s_2+c_3s_3+c_4s_4}&=+{s_1s_2 \over s_1s_2-s_3s_4}+{s_2s_4 \over s_2s_4-s_1s_3}-{s_2s_3 \over s_1s_4-s_2s_3}-1, \label{2}\\
{2c_3 s_3 \over c_1s_1+c_2s_2+c_3s_3+c_4s_4}&=-{s_3s_4 \over s_1s_2-s_3s_4}-{s_1s_3 \over s_2s_4-s_1s_3}-{s_2s_3 \over s_1s_4-s_2s_3}-1, \label{3}\\
{2c_4 s_4 \over c_1s_1+c_2s_2+c_3s_3+c_4s_4}&=-{s_3s_4 \over s_1s_2-s_3s_4}+{s_2s_4 \over s_2s_4-s_1s_3}+{s_1s_4 \over s_1s_4-s_2s_3}-1 \label{4}.
\end{align}
Taking $(\ref{1})+(\ref{2}) \over (\ref{3})+(\ref{4})$, $(\ref{1})+(\ref{3}) \over (\ref{2})+(\ref{4})$, $(\ref{1})+(\ref{4}) \over (\ref{2})+(\ref{3})$, we have
$$
{c_1s_1+c_2s_2 \over c_3s_3+c_4s_4}=-{s_1s_2 \over s_3s_4}, \qquad {c_1s_1+c_3s_3 \over c_2s_2+c_4s_4}=-{s_1s_3 \over s_2s_4}, \qquad {c_1s_1+c_4s_4 \over c_2s_2+c_3s_3}=-{s_1s_4 \over s_2s_3},
$$
equivalently the system of linear equations
$$
\begin{aligned}
c_2\omega_1+c_1\omega_2+c_4\omega_3+c_3\omega_4&=0,\\
c_3\omega_1+c_4\omega_2+c_1\omega_3+c_2\omega_4&=0,\\
c_4\omega_1+c_3\omega_2+c_2\omega_3+c_1\omega_4&=0
\end{aligned}
$$
if we put
$$
\omega_1:=s_2s_3s_4,\quad \omega_2:=s_1s_3s_4,\quad \omega_3:=s_1s_2s_4,\quad \omega_4:=s_1s_2s_3.
$$
We take
$$
\omega_4=-\begin{vmatrix} c_2&c_1&c_4\\c_3&c_4&c_1\\c_4&c_3&c_2 \end{vmatrix}=c_4(-c_1^2-c_2^2-c_3^2+c_4^2)+2c_1c_2c_3.
$$
considering the relation (\ref{ruled}). Then, we have
$$
\begin{aligned}
\omega_1&=\begin{vmatrix} c_3&c_1&c_4\\c_2&c_4&c_1\\c_1&c_3&c_2 \end{vmatrix}=c_1(+c_1^2-c_2^2-c_3^2-c_4^2)+2c_2c_3c_4,\\
\omega_2&=\begin{vmatrix} c_2&c_3&c_4\\c_3&c_2&c_1\\c_4&c_1&c_2 \end{vmatrix}=c_2(-c_1^2+c_2^2-c_3^2-c_4^2)+2c_1c_3c_4,\\
\omega_3&=\begin{vmatrix} c_2&c_1&c_3\\c_3&c_4&c_2\\c_4&c_3&c_1 \end{vmatrix}=c_3(-c_1^2-c_2^2+c_3^2-c_4^2)+2c_1c_2c_4.
\end{aligned}
$$
Hence, (\ref{tttt}) follows from the relation
$$
-{1 \over s_1s_2s_3s_4}s_i=-{1 \over \omega_i}.
$$
Here, we may ignore the common positive scalar multiple $-1 \slash s_1s_2s_3s_4$ by (\ref{ruled}).

\end{document}